\keywords{spectra, finite model theory, planar graphs, bounded degree graphs}
\theoremstyle{plain} 
\def\bbN{{\mathbb N}}
\def\ra{\rightarrow}
\def\nats{{\bbN}}
\def\voc{\tau}
\newcommand{\Am}{\stra}
\newcommand{\Aa}{\Am}
\newcommand{\Tt}{\ensuremath{\mathbb{T}}}
\newcommand{\la}{\leftarrow}
\def\spec{{\rm{spec}}}
\def\calC{{\mathcal C}}
\def\calR{{\mathcal R}}
\def\stra{{\mathfrak A}}
\def\lma{T_\leftarrow}
\def\rma{T_\rightarrow}
\def\lmt{T_\leftarrow^+}
\def\rmt{T_\rightarrow^+}
\def\lfa{f_\leftarrow}
\def\rfa{f_\rightarrow}
\def\Lfa{F_\leftarrow}
\def\Rfa{F_\rightarrow}
\def\ufa{f_\uparrow}
\def\uan{f_\downarrow}
\def\binntime{{\bf NTIME}_2}
\def\binnts{{\bf NTS}_2}
\def\binntisp{{\bf NTISP}_2}
\def\PIFSpec{{\bf PIFSpec}}
\def\BDSpec{{\bf BDSpec}}
\def\QTL{{\bf QTL}}
\def\PPifSpec{{\bf PPifSpec}}
\def\FPPifSpec{{\bf FPPifSpec}}
\def\PBDSpec{{\bf PBDSpec}}
\def\PSpec{{\bf PSpec}}
\def\NP{{\bf NP}}
\def\NE{{\bf NE}}
\def\P{{\bf P}}
\def\NEXPTIME{{\bf NEXPTIME}}
\begin{document}

\title{Bounded degree and planar spectra}

\author[A.~Dawar]{Anuj Dawar}	
\address{Department of Computer Science and Technology, University of Cambridge, J.J. Thomson Avenue, Cambridge CB3 0FD, England}	
\email{anuj.dawar@cl.cam.ac.uk}  

\author[E.~Kopczy\'nski]{Eryk Kopczy\'nski}	
\address{Institute of Informatics, University of Warsaw, Banacha 2, 02-097 Warszawa, Poland}	
\email{erykk@mimuw.edu.pl}  





\begin{abstract}
  \noindent 
The finite spectrum of a first-order sentence is the set of positive 
integers that are the sizes of its models.  The class of finite spectra 
is known to be the same as the complexity class NE.  We consider 
the spectra obtained by limiting models to be either planar (in the
graph-theoretic sense) or by bounding the degree of elements.  We show
that the class of such spectra is still surprisingly rich by establishing
that significant fragments of NE are included among them.  At the
same time, we establish non-trivial upper bounds showing that not all sets
in NE are obtained as planar or bounded-degree spectra.
\end{abstract}

\maketitle

\section{Introduction}

The \emph{spectrum} of a sentence $\phi$ of some logic, denoted
$\spec(\phi)$, is the set of positive integers $n$ such that $\phi$
has a model of cardinality $n$.  In this paper we are solely concerned
with first-order logic and we use the word spectrum to mean a set of
integers that is is the spectrum of some first-order sentence.  Scholz
in~\cite{scholz} posed the question of characterizing those sets of
integers which are spectra.  This question has spawned a large amount
of research.  In particular, Fagin's attempt to answer Asser's
question of whether the complement of a spectrum is itself a spectrum
launched the field of descriptive complexity theory.  An excellent
summary of the history of the spectrum question and methods used to
approach it are given in the survey by Durand et
al.~\cite{fiftyyears}.

An exact characterization of the sets of spectra of first-order
sentences can be given in terms of complexity theory.  This result,
obtained by Fagin~\cite{fagin74} and by Jones and
Selman~\cite{JonesS74} states that a set $S \subseteq \nats$ is a
spectrum if, and only if, the binary representations of integers in
$S$ can be recognised by a non-deterministic Turing machine running in
time $2^{O(n)}$.  In the language of complexity theory, we say that
the class of spectra is exactly the complexity class $\NE$.

Besides characterizations of the class of spectra, researchers have
also investigated how the class is changed by restrictions, either on
the form of the sentence $\phi$ or on the class of finite models that
we consider.  A particularly fruitful line of research in the former
direction has investigated the spectra of sentences in a restricted
vocabulary.  It is an easy observation that if the vocabulary includes
only unary relations, then the spectrum of any sentence is either
finite or co-finite.  Durand et al.~\cite{onefunction} investigate the
spectra of sentences in vocabularies with unary relations and one
unary function and show that the class of such spectra is exactly the
ultimately periodic (or semilinear) sets of natural numbers.  On the
other hand, if we allow two unary functions or one binary relation in
the vocabulary, there are spectra that are \NEXPTIME-complete
\cite{fagin74, onebinary}.  However, since the class of these spectra
is not closed under polynomial-time reductions, it does not follow
that it includes all of $\NEXPTIME$ or indeed even all of $\NE$.
Whether every spectrum spectrum $S$ is the spectrum of a sentence in
such a vocabulary remains an open question.  The results
of~\cite{onefunction} have also been extended beyond first-order logic
to monadic second-order logic by Gurevich and Shelah~\cite{gs03}.

In the present paper, we investigate the spectra that can be obtained
by restricting the class of models considered.  To be precise, we
consider restrictions on the Gaifman (or adjacency) graph of the
structure.  We want to explore what reasonable restrictions on such
graphs make for a more tractable class of spectra.  This is in the
spirit of a recent trend in finite model theory aimed at investigating
it on \emph{tame} classes of structures (see~\cite{Daw07}). 

Fischer and Makowsky~\cite{mak} show that if  $K$ is a class of
structures generated by a certain class of graph grammars known as
$eNCE$-grammars (see~\cite{Kim97}) and all models of $\phi$
are in $K$, then $\spec(\phi)$ is ultimately periodic.  Here, $\phi$
can be a formula of monadic second order logic with counting
(CMSO).  Classes of graphs generated by $eNCE$-grammars include $TW_d$,
the class of structures of tree-width at most $d$, and $CW_d$, the
structures of clique width at most $d$.  It is not difficult to show
that every semilinear set of natural numbers can be obtained as such a
spectrum.  

The restricted classes of structures we investigate in this paper are
those of bounded degree and those whose Gaifman graph is planar.  We
also consider the two restrictions in combination.  There are two ways
that we can define spectra of first-order sentences restricted to a
class of structures $\calC$.  We can define the class of
$\calC$-spectra as the spectra of those sentences $\phi$ all of whose
models are in $\calC$.  Alternatively, we can define the class of
$\calC$-spectra as the class of sets $S\subseteq \nats$ such that
there is a first-order sentence $\phi$ and $n \in S$ iff there is a
model of $\phi$ in $\calC$ of cardinality $n$.  If $\calC$ is itself
first-order definable (for instance if it is the class of graphs of
degree at most $k$ for some fixed integer $k$), then the two notions
coincide.  Otherwise, the first is more restrictive than the second.

Our lower bound results show that the class of spectra obtained by
restricting to bounded degree and to planar graphs are still quite
rich.  In particular, Theorem~\ref{mainlower} shows that the class of
spectra of structures of degree at most 3 still contains all sets $S$
of integers for which $N\in S$ is decidable by a nondeterministic
machine running in time $O(N)$ (which is time exponential in the
binary representation of $N$).  Similar (though somewhat weaker) lower
bounds are established for classes of structures which are required to
be both planar and of bounded degree in Theorems~\ref{clocksim}
and~\ref{queuemach}.  

On the other hand, the upper bounds show that the class of spectra we
obtain is strictly weaker than the class of all spectra.  In
particular we establish, for the bounded degree restriction, in 
Theorem~\ref{mainupper} a complexity upper bound on the class of
spectra obtained.  This,  along with the non-deterministic time hierarchy
theorem~\cite{SFM78} establishes that this class of spectra is not the
whole class $\NE$.
Theorem~\ref{planarupper} establishes a related 
complexity upper bounds on the class of spectra for bounded degree
planar graphs.  Dropping the degree restriction, we can obtain a
complexity upper bound on planar spectra from results of Frick and
Grohe~\cite{FG01} from which again it follows that this is not the
class of all spectra, answering
negatively Open Question~6 from~\cite{fiftyyears}.

\section{Preliminaries}

For a relational vocabulary $\tau$ (i.e.\ a finite collection of
constant and relation symbols), a $\tau$-structure $\Aa$ is a set $A$
along with an interpretation in $A$ of every symbol in $\tau$.  We
will confine ourselves to vocabularies $\tau$ in which all symbols are
binary or unary.  The Gaifman graph $\Gamma(\Aa)$ of $\Aa$ is the
graph on the vertex set $A$ such that two distinct vertices $u$ and
$v$ are adjacent if, and only if, either $(u,v)$ or $(v,u)$ occurs in
the interpretation in $\Aa$ of some binary relation in $\tau$.  We say
that the degree of $\Aa$ is bounded by $d$ if, in $\Gamma(\Aa)$ no
vertex is adjacent to more than $d$ others.  Similarly, we say that
$\Aa$ is planar if $\Gamma(\Aa)$ is.

\subsection{Partial function symbols}
For the most part, we will consider structures in which every binary
relation in $\tau$ is interpreted as the graph of a \emph{partial
  injective unary function} (PIF).  To emphasise this, we use
functional notation and write, for instance, $t_1(x) = t_2(y)$ as
shorthand for the formula that says that the terms $t_1(x)$ and
$t_2(y)$ are defined and their interpretations are equal.  Thus, for
example, $f(x) = f(x)$ is equivalent to saying that $f(x)$ is defined.

A structure interpreting $d$ such partial injective function symbols
is always of degree at most $2d$.  Conversely, if we have a graph of
degree $d$, by Vizing's theorem~\cite{Vizing} we can colour the edges with $d+1$ colours in a way such
that all edges incident with a given vertex are of different colours.
We can then represent such structures using $2d-1$ partial injective
function symbols.  This establishes that in considering the spectra of
first-order sentences over bounded-degree structures, there is no loss
of generality in considering structures in which all binary relations
are PIFs.

\begin{defi}\label{PIFSpecd}
Let $\PIFSpec_d$ denote the set of $S \subseteq \bbN$ such that $S$ is
a spectrum of some sentence $\phi$ using only unary relations and $d$
PIF symbols.
\end{defi}

\begin{prop}\label{propclosure}
If $S \in \PIFSpec_d$, then the following are also in $\PIFSpec_d$:
\begin{enumerate}
\item $S \cup \{n\}$, \label{en:first}
\item $S - \{n\}$, \label{en:second}
\item $\{x+1: x \in S\}$, \label{en:third}
\item $\{x \in \bbN: x+1 \in S\}$. \label{en:fourth}
\end{enumerate}
\end{prop}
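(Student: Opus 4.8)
The plan is to handle the four parts by small, explicit modifications of a sentence $\phi$ witnessing $S \in \PIFSpec_d$. Throughout, write $\phi$ over unary relations $P_1,\dots,P_k$ and PIF symbols $f_1,\dots,f_d$; its models of size $m$ exist iff $m \in S$. The key point to keep in mind is that whatever we do, the modified sentence must still use \emph{only} unary relations and $d$ PIF symbols, so the constructions for parts \eqref{en:third} and \eqref{en:fourth} must be careful not to introduce any new binary symbol, and not to increase the number of PIFs.

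For part \eqref{en:first}, $S \cup \{n\}$: take $\phi' := \phi \vee \psi_n$, where $\psi_n$ is a sentence using one of the unary relations, say $P_1$, to pin down the size exactly — $\psi_n$ asserts ``there are exactly $n$ elements'', which is first-order expressible (a conjunction of sentences of the form ``there exist at least $n$'' and ``there do not exist $n+1$'' distinct elements), and additionally asserts that all $f_i$ are empty and all $P_j$ are empty, so that size $n$ is really realisable. Then $\spec(\phi') = S \cup \{n\}$. For part \eqref{en:second}, $S - \{n\}$: take $\phi' := \phi \wedge \neg(\text{``there are exactly } n \text{ elements''})$; this removes exactly the size-$n$ models and changes nothing else, so $\spec(\phi') = S - \{n\}$. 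Both constructions stay within unary relations and $d$ PIFs since $\psi_n$ and the cardinality sentence use no binary symbols at all.

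For part \eqref{en:third}, $\{x+1 : x \in S\}$: I would designate one element as a ``new'' point that is invisible to the original structure. Concretely, add a fresh unary predicate $U$ (a unary relation, which is allowed), require $U$ to hold of exactly one element $u$, require every $f_i$ to be undefined at $u$ and to never take the value $u$ (so $u$ is isolated in the Gaifman graph), and relativise $\phi$ to the set $\neg U$ — that is, replace every quantifier $\exists y\,\theta$ by $\exists y\,(\neg U(y) \wedge \theta)$ and similarly for $\forall$, and also require that each $f_i$ restricted to $\neg U$ is still a PIF there (which is automatic) and each original $P_j$ is empty on $U$. A model of this sentence of size $m$ consists of a model of $\phi$ of size $m-1$ together with one extra isolated point, so the spectrum is $\{x+1 : x \in S\}$. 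No new PIF and no binary relation were added.

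For part \eqref{en:fourth}, $\{x \in \bbN : x+1 \in S\}$: this is the reverse operation, and it is the step I expect to be the main obstacle, because we need to \emph{produce} an extra element out of thin air rather than hide one. The idea is to use a fresh unary predicate $U$ that holds of exactly one element $u$, require that the original structure lives on $\neg U$ (relativise $\phi$ as above, with $f_i$ undefined at $u$ and never equal to $u$, and $P_j$ empty on $U$); then a model of size $m$ of the new sentence corresponds to a model of $\phi$ of size $m+1$, so $n$ is in the new spectrum iff $n+1 \in S$. The subtlety is purely bookkeeping: one must check that relativisation interacts correctly with the functional notation (a subterm $f_i(t)$ occurring in $\phi$ must have its ``definedness'' clause also relativised, which is fine since $f_i$ is required empty on $U$), and that adding the single unary predicate $U$ and the constraints isolating $u$ does not disturb the PIF property of the $f_i$ or increase their number — it does not. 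Hence all four sets lie in $\PIFSpec_d$.
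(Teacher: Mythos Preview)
Your treatment of parts~\eqref{en:first}, \eqref{en:second}, and \eqref{en:third} is fine and matches the paper's approach. Part~\eqref{en:fourth}, however, has a genuine gap: the construction you describe for it is \emph{identical} to the one you gave for~\eqref{en:third}, and it yields the wrong spectrum. If $U$ holds of exactly one element and you relativise $\phi$ to $\neg U$, then a model of the new sentence of size $m$ has $m-1$ elements in $\neg U$, which form a model of $\phi$. Hence the new spectrum is $\{x+1 : x \in S\}$, not $\{x : x+1 \in S\}$. Your sentence ``a model of size $m$ of the new sentence corresponds to a model of $\phi$ of size $m+1$'' has the arithmetic backwards. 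You correctly diagnosed that~\eqref{en:fourth} requires \emph{producing} an element rather than hiding one, but the construction you wrote down still hides one.

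The paper's argument for~\eqref{en:fourth} supplies the missing idea. Given a model $\stra \models \phi$ of size $n+1$, one deletes a single element $a$ and records how the remaining elements were connected to $a$ using $2d$ fresh unary relations: for each PIF $f$, a predicate $R^{\ra}_f$ marking those $b$ with $f(b)=a$, and $R^{\la}_f$ marking the (at most one) $b$ with $f(a)=b$. Because the $f_i$ are partial injective functions, these unary relations completely encode the neighbourhood of $a$, and one can translate $\phi$ into a sentence $\phi^*$ over the expanded unary vocabulary (still with only the original $d$ PIFs) such that $\stra \models \phi$ iff the reduced structure $\stra^*$ of size $n$ satisfies $\phi^*$. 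This is exactly where the PIF restriction is used: with a general binary relation the deleted element's neighbourhood could not be captured by boundedly many unary predicates.
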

\begin{proof}
Let $\phi$ be a sentence whose spectrum is $S$, witnessing that $S \in
\PIFSpec_d$.  Also let $\eta_n$ be the first-order sentence that
states that there are exactly $n$ distinct elements in the universe.
Then,~(\ref{en:first}) is obtained as the spectrum of the sentence
$\phi \vee \eta_n$ and~(\ref{en:second}) is obtained as the spectrum
of $\phi \land \neg\eta_n$.

To establish~(\ref{en:third}), extend the vocabulary by a new unary
relation symbol $R$ and let $\rho$ be the sentence that says that
there is exactly one element not in $R$.  Now, consider the spectrum
of the sentence $\rho \land \phi^R$, where $\phi^R$ denotes the
sentence obtained by relativising all quantifiers in $\phi$ to $R$.

Finally, for a structure $\Aa$ and an element $a$ in its universe, we
can define a structure $\Aa^*$ by removing the element $a$ and
expanding the structure with $2d$ additional unary relations that code 
all the various ways that elements could be connected to $a$.  That
is, for each PIF symbol $f$, we have unary relations $R^{\ra}_f$ and
$R^{\la}_f$ so that $R^{\ra}_f(b)$ if, and only if, $f(b) = a$ and $R^{\la}_f(b)$
if, and only if, $f(a) = b$.  It is now not difficult to translate the sentence
$\phi$ into a sentence $\phi^*$ so that $\Aa \models \phi$ iff $\Aa^*
\models \phi^*$.  In particular, $\phi^*$ asserts that $R^{\la}_f$
has at most one element.
\end{proof}

We now establish the relationship between $\PIFSpec_d$ and spectra
over \emph{graphs} of bounded degree.

\begin{defi}
Let  $\BDSpec_d$ denote the class of sets $S$ such that $S$ is a spectrum of
some sentence $\phi$ in a vocabulary with one binary relation $E$ and
some number of unary relations such that every model of $\phi$ interprets $E$
as a symmetric relation and has degree at most $d$. 
\end{defi}

The following theorem allows us to simplify formulae using lots of
symbols to formulae using only a single symmetric binary relation of
degree 3.

\begin{thm}\label{thsimplify}
If $S \in \PIFSpec_d$ then $\{2dn+l: n \in S\}$ is in $\BDSpec_3$ for
any $l$.
\end{thm}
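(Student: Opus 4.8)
The plan is to replace each element $a$ of a model of $\phi$ by a \emph{gadget} on exactly $2d$ vertices, to route the $d$ partial injective functions as undirected edges between distinguished vertices of these gadgets, and to capture both the shape of the construction and the sentence $\phi$ by a single first-order sentence over one symmetric binary relation. Before doing so I would put $\phi$ into a normal form --- by introducing auxiliary quantified variables --- in which every atomic subformula is $R(x)$, $x=y$, or $f_j(x)=y$ with $x,y$ variables (so e.g.\ a term equation $f_1(x)=f_2(y)$ becomes $\exists z\,(f_1(x)=z\wedge f_2(y)=z)$, and ``$f_j(x)$ is defined'' becomes $\exists z\, f_j(x)=z$).

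Given a model $\Aa\models\phi$ with $|A|=n$, I would build a graph $G(\Aa)$ as follows. For each $a\in A$ take $2d$ fresh vertices $v^1_a,\dots,v^{2d}_a$, join them consecutively into a path (the \emph{spine} of $a$), colour $v^i_a$ by a fresh unary predicate $P_i$, and record each original unary relation $R$ on the set $\{v^1_a : a\in R^{\Aa}\}$ by a fresh predicate $\tilde R$. Regarding position $j$ (for $1\le j\le d$) as the outgoing port of $f_j$ and position $d+j$ as its incoming port, add the edge $\{v^j_a, v^{d+j}_b\}$ whenever $f_j(a)=b$. Finally, if $l>0$, add $l$ further vertices forming a fixed disconnected $l$-vertex path marked by a fresh predicate $J$. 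Every vertex then has at most two spine neighbours and at most one port edge, so $G(\Aa)$ is symmetric, has degree at most $3$, and has exactly $2dn+l$ elements.

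The core of the argument, and the place where bounded degree is essential, is that this encoding is first-order invertible. Since positions carry the marks $P_i$, a spine edge is exactly an edge between a $P_i$- and a $P_{i+1}$-vertex, while a port edge is exactly one between a $P_j$- and a $P_{d+j}$-vertex; for $d\ge 2$ these two patterns are disjoint (they could coincide only if $d+j=j+1$), so port edges are recognisable and --- one endpoint being an outgoing, the other an incoming port --- canonically oriented. Hence I would write a first-order sentence $\chi$ asserting that a structure is isomorphic to some $G(\Aa)$: the $P_i$ together with $J$ partition the universe, $J$ has exactly $l$ elements and spans the fixed path, every $P_1$-vertex starts a correctly $(P_1,\dots,P_{2d})$-labelled spine, every $P_i$-vertex with $i\ge 2$ lies on exactly one such spine, $\tilde R\subseteq P_1$ for every $R$, the only edges are spine edges and port edges, and each port carries at most one port edge --- all expressible since every constraint is local in a bounded-degree graph. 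Dually I would translate $\phi$ to $\phi^*$ by relativising every quantifier to $P_1$ (so variables range over spines, i.e.\ over the decoded universe), replacing $R(x)$ by $\tilde R(x)$, keeping $x=y$, and replacing $f_j(x)=y$ by the bounded-radius --- hence first-order --- statement ``walk $j-1$ steps along the spine of $x$ to its outgoing-$f_j$ port $w$; $w$ has a port neighbour $z$ with $P_{d+j}(z)$; and $y$ is reached from $z$ by walking $d+j-1$ steps back along its spine''. A routine induction should give $\Aa\models\phi \iff G(\Aa)\models\phi^*$, and single-valuedness and injectivity of the $f_j$ decoded from any model of $\chi$ both follow from ``at most one port edge per port''.

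It then follows that $\chi\wedge\phi^*$ has exactly the graphs $G(\Aa)$ with $\Aa\models\phi$ as its models, each symmetric of degree at most $3$ and of size $2dn+l$ with $n=|\Aa|\in S$; hence $\spec(\chi\wedge\phi^*)=\{2dn+l:n\in S\}$, which is therefore in $\BDSpec_3$. I expect the main obstacle to be purely bookkeeping: checking carefully that bounded degree really does let first-order logic both certify well-formedness (the sentence $\chi$) and perform the fixed-length navigation used to read off $f_j(x)=y$. The case $d=1$ is mildly special --- there the outgoing and incoming ports become spine-adjacent, so the pattern-disjointness above fails --- and would be handled by a small separate adjustment to the gadget.
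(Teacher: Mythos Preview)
Your proposal is correct and follows essentially the same approach as the paper: replace each element by a $2d$-vertex path gadget with designated ports for the $d$ PIFs, add $l$ extra vertices, axiomatise the resulting shape in first-order logic, and interpret $\phi$ back over the gadgets. Your version is actually more explicit than the paper's brief sketch (which uses only $d$ position predicates $P_1,\dots,P_d$, reusing $P_i$ for both the $i$th outgoing and $i$th incoming port, and leaves the well-formedness sentence and the translation implicit), and you correctly flag the $d=1$ corner case that the paper does not mention.
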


\begin{proof}
Let $\phi$ be a formula using $d$ PIF symbols $f_1, \ldots, f_d$ such
that $S$ is the spectrum of $\phi$ and let  $\stra$ be a structure of
the same vocabulary.  Define the vocabulary $\tau$ to contain the
binary relation $E$ and $d$ additional unary predicates  $P_1, \ldots,
P_d$.  We define a $\tau$-structure $G$ by replacing each element $a$
of $\stra$ with a gadget consisting of a simple $E$-path of vertices
$p_1(a), \ldots, p_d(a), q_1(a), \ldots, q_d(a)$, where $p_i(a)$ and
$q_i(a)$ both satisfy $P_i$ and no other predicate from  $P_1, \ldots,
P_d$.  If $f_i(a)=a'$, there is an edge between $p_i(a)$ and
$q_i(a')$.  In addition $G$ contains $l$ isolated vertices.  It is
easily verified that $G$ has degree bounded by 3.

It is easy to write a formula $\phi_g$ of first-order logic whose
models are exactly the coloured graphs $G$ obtained in this way.  And,
it is also easy to see that $\stra$ can be interpreted in $G$.
Combining these, we get a sentence whose spectrum is exactly  $\{2dn+l:
n \in S\}$.
\end{proof}

We are also interested in spectra obtainable by models which are
planar graphs.  We write $\PPifSpec_d$ to denote the set of
$S \subseteq \bbN$ such that $S$ is a {\it planar} spectrum of some
formula $\phi$ using only unary relations and $d$ PIF symbols.  That
is to say, $n \in S$ iff there is some planar structure $\stra$ such
that $\stra \models \phi$.  Note that there is no obvious inclusion
either way between $\PPifSpec_d$ and $\PIFSpec_d$.  Some
$S \in \PIFSpec_d$ may be witnessed by a formula $\phi$ which in some
cardinalities has only non-planar models.  On the other hand, as
planarity is not itself definable in first-order logic, the set $S'$
of those $n$ for which $\phi$ has planar models of size $n$ may not be
itself the spectrum of a first-order sentence.  We write
$\FPPifSpec_d$ to denote the set of $S \subseteq \bbN$ such that $S$
is a spectrum of some formula $\phi$ using only unary relations and
$d$ PIF symbols, and such that all models of $\phi$ are planar
sets. $\FPPifSpec_d$ is a subset of both $\PIFSpec_d$ and
$\PPifSpec_d$.

\subsection{Complexity classes}

Our aim is to characterize spectra using complexity theory. For this
purpose, we define complexity classes of sets of numbers in terms of
resource bounds on machines accepting the binary representations of
numbers.  

We write $\binntime(f(N))$ to denote the class of sets
$S \subseteq \bbN$ such that there is a nondeterministic multi-tape
Turing machine that for any $N$ accepts the binary representation of
$N$ iff $N\in S$ and runs in time $O(f(N))$.  If we interpret the
classes $\NP$ and $\NE$ as sets of numbers in this way, then $\NP
= \bigcup_k \binntime(\log^k(N))$ and $\NE
= \bigcup_k \binntime(N^k)$.  By results of Fagin~\cite{fagin74} and
Jones and Selman~\cite{JonesS74}, we know that $\NE$ is exactly the
class of first-order spectra.  In the present paper we will relate
bounded-degree and planar spectra to the classes
$\binntime(N \log^k(N))$, which are between $\NP$ and $\NE$.  Indeed,
by the time hierarchy theorem for nondeterministic
machines~\cite{SFM78}, the classes are \emph{strictly} between $\NP$
and $\NE$.

We also write $\binntisp(T(N), S(N))$ to denote the class of sets
$S \subseteq \bbN$ such that there is a nondeterministic Turing
machine accepting the binary representation of $N$ iff $N \in S$ and
running in time $O(T(N))$ and space $O(S(N))$.  We write
$\binnts(f(N))$ to denote the union of $\binntisp(T(N), S(N))$ over
all functions $T$ and $S$ such that $T(N) \cdot S(N) = O(f(N))$.

These classes are less robust than the well known complexity classes
such as $\P$, $\NP$, or $\NEXPTIME$ because the functions bounding
resources are not closed under composition with polynomials.  The
classes are then sensitive to the exact computation model (e.g. single
tape Turing machine, an automaton with two stacks, multi-tape Turing
machine, RAM), and some of our results are stated for the specific
model of automata with two stacks, or equivalently, single tape Turing
machines which can insert or delete symbols from the tape; we use the
symbol $\binntime^S(f(N))$ instead of $\binntime(f(N))$ in this case.

However, the classes are somewhat robust in a weaker sense.  The
following is established by standard methods of speed-up theorems by
considering a larger tape alphabet.

\begin{prop}\label{palpha}
If a machine $M$ recognizes the binary representations of elements of
$S$ in time $O(f(N)) = \Omega(\log^3(N))$, then for any $\alpha > 0$
there is a machine $M'$ recognizing the same set and running in time
$\alpha f(N)$ for all $N > N_0$, for some $N_0 \in \bbN$.
\end{prop}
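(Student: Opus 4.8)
The statement is the classical linear speed-up theorem of Hartmanis and Stearns, transported to the present setting in which the ``input'' fed to $M$ is the binary representation of $N$ --- a string of length $n = \Theta(\log N)$ --- and the resource bound $f$ is written as a function of $N$. The plan is to build $M'$ over an enlarged tape alphabet in which a single symbol encodes a block of $m$ consecutive cells of a tape of $M$, where $m$ is a constant depending only on $\alpha$, to be fixed at the end. I would split $M'$ into a preprocessing phase and a simulation phase. In the preprocessing phase, $M'$ scans its input once and writes it onto a work tape in block-compressed form, so that the contents of each tape of $M$ is represented using roughly $n/m$ cells; this costs $O(n)$ steps.

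In the simulation phase, $M'$ repeatedly simulates $m$ consecutive steps of $M$ using only a bounded number $c_0$ of its own steps: it first gathers into its finite control the block currently under each head of $M$ together with the two neighbouring blocks, then --- since in $m$ steps no head of $M$ can leave that region --- it determines the joint effect of the next $m$ steps of $M$ (choosing one outcome nondeterministically if $M$ is nondeterministic, and detecting along the way whether $M$ halts within those steps), and finally writes back the updated blocks and repositions the heads. Since $M$ halts within $O(f(N))$ steps, this phase takes at most $c_0 f(N)/m + O(1)$ steps of $M'$, and $M'$ accepts its input precisely when $M$ does. Altogether $M'$ runs in time at most $d_1 n + c_0 f(N)/m + O(1)$, where $d_1$ and $c_0$ do not depend on $m$; choosing $m$ so large that $c_0/m \le \alpha/2$ bounds the simulation contribution by $(\alpha/2)f(N)$, and since $n = \Theta(\log N)$ whereas $f(N) = \Omega(\log^3 N)$, the remaining term $d_1 n + O(1)$ is $o(f(N))$ and hence below $(\alpha/2)f(N)$ for all $N$ greater than some $N_0$.

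The delicate point here is not any single step but the bookkeeping that guarantees the additive overhead is negligible: because the input is only logarithmically long in the parameter $N$, one has to verify that the bound $f(N) = \Omega(\log^3 N)$ is genuinely superlinear in the input length, so that both the $O(n)$ cost of reading the input and the $O(1)$ per-macro-step slack are swallowed; this is the only place the hypothesis on $f$ is used, and in fact any $\omega(\log N)$ bound would do. A minor technical nuisance is treating the boundary blocks and the head realignment in the simulation phase carefully enough that $c_0$ really is independent of $m$. Finally, I note that the whole argument is insensitive to nondeterminism and adapts without change to the two-stack automaton model (where, moreover, the insert/delete operations make even the preprocessing phase straightforward).
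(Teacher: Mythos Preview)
Your proposal is correct and is precisely the standard linear speed-up argument via alphabet enlargement that the paper invokes; the paper does not actually give a proof but merely states that the proposition ``is established by standard methods of speed-up theorems by considering a larger tape alphabet.'' Your detailed execution --- including the observation that the $\Omega(\log^3 N)$ hypothesis is only there to absorb the $O(\log N)$ preprocessing overhead (and that any $\omega(\log N)$ lower bound would suffice) --- is exactly what the paper leaves implicit.
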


Proposition \ref{palpha} allows us to assume that a machine solving a
problem in $\binntime(f(N))$ actually solves it in time $\alpha f(N)$
for all $N > N_0$.  When relating complexity to spectra, the finitely
many exceptions $N \leq N_0$ are not important as a result of
Proposition \ref{propclosure}.

\section{Basic structure}

In this section, we describe a basic construction, consisting of the
description of a structure of each finite cardinality $N$, along with
a first-order axiomatisation of this class of structures.  The
structures in question are planar and of bounded degree and play an
important role in our results in encoding machine computations.

Let $\stra_N$ be the following structure over the vocabulary $\voc_0$
consisting of a single partial function symbol $f_{+1}$: 

\begin{itemize}
\item the universe $A_N = \{1, \ldots, N\}$,
\item $f_{+1}(n) = n+1$ for $n < N$,
\item $f_{+1}(N)$ is not defined.
\end{itemize}

\begin{thm}\label{makeorder}
There is a FO formula $\phi_M$ over vocabulary $\voc_1 \supset
\voc_0$ containing 2 PIF symbols, such that $\phi_M$ has models of
all non-negative integer cardinalities, and all models of $\phi_M$
restricted to $\{f_{+1}\}$ are isomorphic to $\stra_N$ for some
$N$. Moreover, all models of $\phi_M$ are planar. 
\end{thm}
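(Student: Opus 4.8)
The plan is to build a single first-order sentence $\phi_M$ over a vocabulary $\voc_1$ with the function $f_{+1}$ and one extra PIF symbol (plus a handful of unary predicates, which can be absorbed into the ``$2$ PIF symbols'' count by the standard coding since unary predicates are free) that forces $f_{+1}$ to be a single directed Hamiltonian path while keeping the Gaifman graph planar. The key obstacle is the familiar one: the FO sentence ``$f_{+1}$ is injective, and the relation it generates is acyclic'' cannot outright express acyclicity, so I need a \emph{local} certificate that rules out cycles. The standard trick — and the one I would use — is to introduce a second PIF $f_\leftarrow$ that is forced to be the reverse of $f_{+1}$ (i.e.\ $f_\leftarrow(f_{+1}(x)) = x$ and $f_{+1}(f_\leftarrow(x)) = x$ wherever defined, which injectivity already half-gives us), together with unary predicates $\textit{first}$ and $\textit{last}$ marking the endpoints, and axioms saying: there is exactly one element where $f_{+1}$ is undefined (call it $\textit{last}$) and exactly one element not in the image of $f_{+1}$ (call it $\textit{first}$); every other element has $f_{+1}$ defined and is in the image of $f_{+1}$; $\textit{first} \ne \textit{last}$ unless $N = 1$; and $\textit{first} = \textit{last}$ or $N=0$ handled as degenerate small cases via Proposition~\ref{propclosure}. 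Injectivity of $f_{+1}$ plus ``exactly one source, exactly one sink, everything else has in-degree and out-degree $1$'' forces the functional graph of $f_{+1}$ to be a disjoint union of one path and some cycles; so the remaining job is to kill the cycles.

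To kill the cycles I would use the classic ``distance colouring'' device realised with the extra function and a small constant number of unary predicates, but since cycles of unbounded length are not FO-detectable by a fixed colouring, the cleaner route is different: I will not try to axiomatise $\stra_N$ exactly with $f_{+1}$ alone. Instead I exploit that the theorem only requires that \emph{the reduct to $\{f_{+1}\}$ of any model be isomorphic to some $\stra_N$}, and I am free to add structure that certifies this. The right gadget is an auxiliary ``midpoint'' or ``binary address'' structure. Concretely, I would add a second PIF $g$ and unary predicates encoding a balanced-binary-search-tree-like or a ``doubling pointer'' overlay: require that from $\textit{first}$, following $g$ repeatedly one stays within the $f_{+1}$-path and that $g$-orbits reach $\textit{last}$; more simply, demand the existence of a linear pre-order certificate. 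Honestly, the slickest implementation — and the one I expect the authors use — is: there is one sink, one source, $f_{+1}$ is injective (so $f_\leftarrow$ exists as its partial inverse), and additionally we require a predicate $M$ (the ``marked'' elements) together with an injective function $h$ mapping each element to a marked element ``ahead of it'', such that the marked elements, under $f_{+1}$-induced order, are strictly fewer — a counting/pigeonhole contradiction if a cycle existed because on a cycle every element would need such an $h$ into a proper subset of itself, violating injectivity when the subset is all-but-one. I would set $M = \{x : x \ne \textit{first}\}$ and $h = f_\leftarrow$; then injectivity of $h$ on a cycle would force $\textit{first}$ to lie on the cycle, but $\textit{first}$ has empty $f_\leftarrow$, contradiction. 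This shows every model's $f_{+1}$-reduct has no cycle, hence is a single path, hence is $\stra_N$.

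For planarity: the Gaifman graph of a model consists of the path edges of $f_{+1}$ (a path, trivially planar) together with the edges contributed by the auxiliary function $f_\leftarrow$, which are \emph{the same edges} (since $f_\leftarrow$ is the reverse of $f_{+1}$), so the Gaifman graph is literally just a path plus possibly the isolated degenerate cases — manifestly planar and of degree at most $2$. If the chosen implementation needs genuinely extra non-parallel edges (e.g.\ a doubling-pointer overlay), planarity fails, which is exactly why I would insist on the ``$f_\leftarrow$ as inverse'' design: it adds no new Gaifman edges at all. The last step is bookkeeping: handle $N = 0$ and $N = 1$ by taking $\phi_M$ to be the disjunction of the main sentence with $\eta_0 \vee \eta_1$ (as in the proof of Proposition~\ref{propclosure}), adjusting the endpoint axioms accordingly; verify the vocabulary uses $f_{+1}$ plus one PIF, i.e.\ $2$ PIF symbols total over $\voc_1 \supset \voc_0$; and note $\phi_M$ has a model of every non-negative cardinality because $\stra_N$ expanded by $f_\leftarrow$ (and the small cases) always satisfies it. The main work, and the only place where cleverness is needed, is pinning down the precise no-cycle certificate; everything else is routine.
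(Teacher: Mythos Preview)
Your proposal contains a genuine gap in the cycle-killing step. Taking the second PIF to be $f_\leftarrow$, the inverse of $f_{+1}$, adds no new Gaifman edges, and consequently it cannot help you rule out cycles: by Hanf locality, any FO sentence over $\{f_{+1}, f_\leftarrow\}$ that is satisfied by the path $\stra_N$ (for $N$ large enough) is also satisfied by the disjoint union of $\stra_N$ with any sufficiently long cycle, since elements deep in a long cycle have the same $r$-neighbourhood type as elements deep in a long path. Your specific pigeonhole argument illustrates this failure concretely: on a cycle $C$, $f_\leftarrow$ is a bijection from $C$ to $C$, and since $\textit{first}$ lies on the path component (not on $C$), every cycle element lands in $M = \{x : x \neq \textit{first}\}$ without any injectivity violation. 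Nothing forces $\textit{first}$ onto the cycle.

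The paper's proof takes the route you explicitly reject: the second PIF is a genuine \emph{doubling} function $f_{2x}$, axiomatised so that $2\cdot 1 = 1+1$, $2x = 2(x-1)+2$, and every element is either even or the predecessor of an even element. A minimal $f_{+1}$-cycle of length $l$ must then have every second element in the image of $f_{2x}$, so $l$ is even, and the preimages under $f_{2x}$ form a shorter cycle of length $l/2$, contradicting minimality. Your worry that ``a doubling-pointer overlay'' would break planarity is unfounded: the resulting structure embeds as a spiral (each layer $[2^{k-1},2^k-1]$ is a ring, with $f_{2x}$-edges going radially to the next ring), which is planar. That is precisely the insight that makes the theorem work.
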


\begin{proof}
The vocabulary $\voc_1$ consists of two PIF symbols $f_{+1}$ and
$f_{2x}$.  Instead of writing $f_{+1}(x)$ and $f_{2x}(x)$, we just
write $x+1$ and $2x$, respectively.  We also use $x+k$ for the $k$-th
iteration of $x+1$. 

The formula $\phi_M$ is a conjunction of the following statements:

\begin{enumerate}
\item \label{cnext} $x+1$ is defined for all values of $x$ except one.
  In the sequel, we write $N$ for the unique $x$ for which $x+1$ is
  not defined. 

\item \label{cprev} For all values of $x$ except one, there is an
  element $y$ such that $y+1=x$.  We write $x-1$ to denote this $y$.
  The unique element $x$ for which $x-1$ is not defined is denoted
  $1$.

\item \label{halving} If $2x$ and $x-1$ are both defined for a given
  $x$, then $(2(x-1)+1)+1$ is also defined, and $(2(x-1)+1)+1 = 2x$. 

\item \label{czero} We have $2 \cdot 1 = 1+1$.

\item \label{eitheror} Either $x=2y$ for some $y$, or $x+1=2y$ for some $y$.
\end{enumerate}

It is straightforward to check that $\stra_N$ expanded with function
$f_{2x}(x) = 2x$ is a model of $\phi_M$.  Figure~\ref{fig:spiral}
shows that such models are indeed planar.
\begin{figure}
\begin{center}
\includegraphics{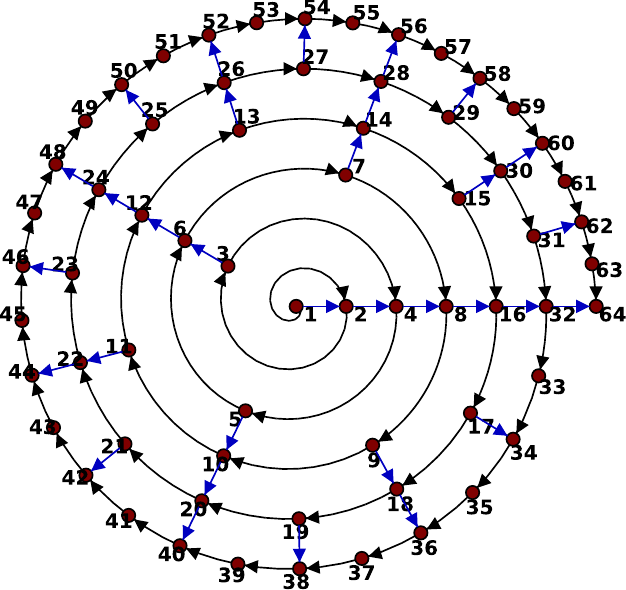}
\end{center}
\caption{Basic spiral structure.}
\label{fig:spiral}
\end{figure}

Next, we argue that, up to isomorphism, these are the only models of
$\phi_M$.  Let $\stra$ be a model of $\phi_M$. Suppose that the
interpretation of $f_{+1}$ has a cycle, i.e., there is an $a \in A$
such that $a+k = a$ for some $k > 0$.   Let $l$ be the minimum length
of such a cycle.  From condition~\ref{eitheror} we know that every
second element of this cycle is of the form $2x$ for some $x$.  In
particular, $l$ must be even.  By condition~\ref{halving} and the
injectivity of $f_{2x}$ we know that the elements $x$ such that $2x$
is on the cycle must themselves form a cycle, this time of length
$l/2$. This contradicts the assumption that $l$ is minimal. 

Since there are no cycles, it follows from conditions~\ref{cnext}
and~\ref{cprev} that each element of $A$ is of form $1+(k-1)$ for some
integer $k$.  Writing $a_k$ for this element, we can prove inductively
that $2(a_k) = a_{2k}$.  This means that $a_k \mapsto k$ is the
required isomorphism to $\stra_N$ (where $N= |A|$).
\end{proof}

The spectrum of $\phi_M$ is the set of all natural numbers which is
not itself very interesting.  However, the structures we have
axiomatized can be used to describe simple properties of numbers in a
simple way.

\begin{exa}\label{expowersoftwo}
Consider the vocabulary $\voc_1$ expanded with an additional unary
relation $P$, and add an axiom: $P(x)$ iff $x=1$ or $\exists y \, P(y)
\wedge x=2y$.  We can prove inductively that in any expansion of
$\stra_N$ that is a model of this axiom, $P$ has to
be interpreted as the set of powers of two.  Adding a further axiom
$P(N)$ we obtain a sentence whose spectrum is the set of powers of two.
Moreover, all models of this sentence are planar graphs.

We could also achieve a similar effect by adding a function $x \ra
2^x$ to our vocabulary, and the following axiom: $2^{x+1} = 2 \cdot 2^x
\wedge 2^1 = 1+1$.  We can prove inductively that this symbol has to
be interpreted in the intended way.  However, the models are no longer
planar, and as we add a new PIF, the degree bound increases to 6.
\end{exa}

\begin{exa}\label{exmultiply}
Consider the vocabulary $\voc_1$ expanded with an additional constant
symbol $C$ and two additional functions that we denote $x+C$ and $x
\cdot C$.  We axiomatise these PIFs by means of the following
axioms: $1+C = C+1$, $(x+1)+C = (x+C)+1$; $1 \cdot C = C, (x+1) \cdot
C = x \cdot C+C$.  We can prove inductively that $x+C$ and $x \cdot C$
are interpreted in the intended way.  By adding the following axioms:
$C \neq 1, \exists y\ y \neq 1 \wedge N=y\cdot C$,  
we obtain a sentence whose spectrum is the set of composite numbers,
and all of whose models are of degree at most 8.
\end{exa}

\begin{exa}\label{exfib}
Consider the vocabulary $\voc_1$ expanded with an additional PIF $F$
axiomatized by the following axioms: $F(1) = 2$, $F(n+1) = F(n)+2$ if
$\exists m F(m)=n$, and $F(n+1) = F(n)+1$ if no such $m$ exists.  We
also have an additional unary relation $\Phi$ and the axioms:
$\Phi(x)$ iff $x=1$ or $\exists y\ x=F(y) \wedge \Phi(y)$; and
$\Phi(N)$.  It can be shown (by induction again) that the spectrum of
the resulting sentence is the set of Fibonacci numbers.

Our vocabulary here has three PIFs, so the structures have degree
bounded by 6.  However, the function $2x$ is not necessary in this
particular case.  We could add the symbols $F$ and $\Phi$ to the
vocabulary $\voc_0$ instead.  We need an axiom similar to
condition~\ref{eitheror} in the definition of $\phi_M$, but for $F(x)$
instead of $2x$.  With this $F(x)$ is sufficient to fix the structure.
By removing the axioms related to the function $2x$, we obtain a
sentence whose spectrum is again the set of Fibonacci numbers, and all
of whose models are planar of degree at most 4.
\end{exa}

\begin{exa}\label{exbinary}
Consider $\voc_1$ expanded with a function symbol $P$ axiomatized
with the following axioms: $P(1) = N$; if
$P(x)\neq 1$ then $P(x) = 2P(x+1)$ or $P(x) = 2P(x+1)+1$; and if
$P(x+1)$ is defined then $P(x)$ is also defined.
Let $\phi_1(x)$ denote the value $1$ if $P(x) = 2P(x+1)+1$ and $0$ if
$P(x) = 2P(x+1)$.  Let $l$ be the unique element such
that $P(l)=1$ and let $\phi_1(l) = 1$.  Then $\phi_1(l)\ldots \phi_1(1)$ is the binary
representation of $N$. 
\end{exa}

\begin{exa}
Let $S \in \binnts(N)$.  We show that $S \in \PIFSpec_4$.  This is a
consequence of Theorem~\ref{mainlower} below, but we include the
argument here as it is strikingly simple.

Example~\ref{exbinary} allows us to axiomatise structures $\stra_N$
expanded with the binary representation of $N$.
Example~\ref{exmultiply} then shows how to axiomatise a grid of width
$C$ and height $C_2$, where $C \times C_2 \leq N$ (we only use the
functions $x+1$ and $x+C$).  Let $C = S(N)$ and $C_2 = T(N)$.  Then on
the grid, with additional unary relations, we can describe the
computation of a Turing machine in the standard way.
\end{exa}

\section{Lower bound: Turing machines via spectra}

In this section we establish a lower bound, by showing that the class
of bounded-degree spectra includes all sets in $\binntime(N)$.  The
proof proceeds by constructing, from a Turing machine $A$, a
first-order sentence whose models are, in a sense we make precise,
codings of accepting computations of $A$.  For simplicity of
exposition, in the proof we confine ourselves to machines with one
tape, but the construction can easily be extended to machines with two
or more tapes.

\begin{thm}\label{mainlower}
$\binntime(N) \subseteq \PIFSpec_6$.
\end{thm}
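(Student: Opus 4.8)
The plan is to encode an accepting computation of a nondeterministic one-tape Turing machine $A$ running in time $O(N)$ as a first-order structure over a vocabulary with a bounded number of PIF symbols, so that the structure has cardinality $N$ (up to the additive/multiplicative adjustments provided by Propositions~\ref{propclosure} and~\ref{palpha} and Theorem~\ref{thsimplify}) precisely when $A$ accepts the binary representation of $N$. The backbone is the structure $\stra_N$ of Theorem~\ref{makeorder}: it already gives a linear order of length $N$ via $f_{+1}$, the doubling function $f_{2x}$, and a planar layout, at a cost of $2$ PIF symbols. On top of this I would lay out the tableau of the computation. The key idea is that a machine running in time $cN$ for small constant $c$ has a computation tableau with $O(N)$ cells in total (time steps $\times$ tape cells, both $O(N)$), so the whole tableau can be indexed by a single "time line'' of the $N$ elements; I would use one PIF, call it $f_{\mathrm{row}}$, to chop the $N$ elements into consecutive blocks each representing one configuration (one row of the tableau), and the successor $f_{+1}$ to move along within a row. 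Using Example~\ref{exbinary} I first extract the binary representation of $N$ onto a suffix of the elements; the input to $A$ is exactly this string, so I can write a first-order axiom forcing the first row of the tableau to be the initial configuration of $A$ on input $\mathrm{bin}(N)$.

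The heart of the argument is the local consistency condition between successive rows. Each cell carries a finite amount of data (a tape symbol, and optionally the state together with the fact that the head is here) coded by unary predicates, and a nondeterministic "choice'' predicate recording which transition was taken. The standard Cook--Levin-style local rule says that the contents of a cell in row $t+1$ are determined by the contents of that cell and its two neighbours in row $t$ together with the transition chosen. To express "the cell directly above'' in first-order logic I need a PIF $f_{\uparrow}$ linking a cell to the corresponding cell one row up; the axioms will say that $f_{\uparrow}$ commutes appropriately with $f_{+1}$ and respects $f_{\mathrm{row}}$, so that it is forced to be the genuine vertical successor. With $f_{\uparrow}$ in hand, the transition rule, the "unchanged away from the head'' rule, and the handling of the head position all become first-order statements quantifying over a single cell and applying $f_{+1}$, $f_{-1}$, $f_{\uparrow}$ a bounded number of times. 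Finally an acceptance axiom asserts that some cell is in an accepting state (or that the last row is an accepting configuration). Counting symbols: $f_{+1}$, $f_{2x}$ from Theorem~\ref{makeorder}, plus $f_{\mathrm{row}}$ and $f_{\uparrow}$, plus the one auxiliary PIF used inside Example~\ref{exbinary}; this is why the bound is $\PIFSpec_6$ rather than something smaller, and one must be slightly careful to reuse function symbols between the sub-constructions to stay within six.

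The main obstacle I anticipate is the bookkeeping of \emph{lengths}: the $N$ elements must simultaneously host the binary representation of $N$ (length $O(\log N)$), the initial configuration (length equal to the space bound, $O(N)$), and the full tableau ($O(N)$ rows each of length $O(N)$ gives $O(N^2)$, which is too many). The resolution is that a machine in $\binntime(N)$ uses only $O(N)$ \emph{time} and hence only $O(N)$ \emph{space}, and if we are cleverer we can fold the tableau: rather than store the entire history we store only a single "sliding'' configuration of length $O(N)$ and let the PIF $f_{\uparrow}$ be a partial function that is defined only on the current window, with the axioms arranging that each element's data is overwritten as the simulation advances — i.e. the structure encodes the computation as a sequence of \emph{edits} to a tape of length $O(N)$, which is exactly the "insert/delete'' view alluded to in the Preliminaries. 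Getting the arithmetic of block sizes to line up so that the total element count is a clean affine function of $N$ (absorbed by $\{2dn+l:n\in S\}$ from Theorem~\ref{thsimplify} and the closure properties), while keeping every constraint expressible with only unary predicates and bounded iteration of the six PIFs, is the delicate part; once the layout is fixed, each individual axiom is a routine first-order sentence, and checking that the described structures are, conversely, the only models — so that the spectrum is exactly the accepting instances — follows the same inductive pattern as the proof of Theorem~\ref{makeorder}.
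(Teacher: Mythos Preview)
Your proposal correctly identifies the overall architecture---the $\stra_N$ backbone from Theorem~\ref{makeorder}, the binary extraction from Example~\ref{exbinary}, and a computation encoding on top---and you correctly spot the central obstacle: a full $T\times S$ tableau has $\Theta(N^2)$ cells, far too many. But your proposed resolution does not work.

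You write that you will ``store only a single sliding configuration of length $O(N)$'' with data ``overwritten as the simulation advances''. A first-order structure is static: there is no simulation, no advancing, no overwriting. Every element carries fixed predicate values, and the sentence must verify the \emph{entire} computation history from those fixed values. If only one configuration is present, there is nothing to check transitions against. Recasting this as ``a sequence of edits'' is a step in the right direction---there are only $O(T)$ head moves---but you have not explained the hard part: when the head revisits a cell after $k$ intervening steps, how does a first-order axiom of bounded quantifier depth recover what that cell currently contains? Your PIFs $f_{\mathrm{row}}$ and $f_{\uparrow}$ as described give you neighbours in a grid; they do not give you, in bounded term depth, the last time a given cell was touched.

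The paper's proof supplies exactly the missing idea. Each element represents not a single (cell, time) pair but a pair $(x,[y_1,y_2))$: tape cell $x$ over a maximal time interval during which its content is unchanged. Three PIFs $\ufa,\lfa,\rfa$ link an element to the same cell in the next interval and to the adjacent cells in intervals that start at the same moment; derived terms $\Lfa,\Rfa$ then recover, in bounded depth, the current content of the neighbouring cells. An amortized potential argument shows that the total number of such elements over a $T$-step run is at most $S+4T+2\log T$, which is $O(N)$ by Proposition~\ref{palpha}. This interval representation and its size analysis are the heart of the proof; without them your outline remains a plan with a hole where the linear-size encoding should be. (Incidentally, the three computation PIFs together with the two from Theorem~\ref{makeorder} and the one from Example~\ref{exbinary} are what make six; your count of five with a hope of ``reusing symbols'' is off by one because the computation part genuinely needs three, not two.)
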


\begin{proof}
Suppose $S \subseteq \bbN$ is in $\binntime(N)$ and there is a single
tape Turing machine $A$ that accepts the binary
representation of $N$ just in case $N \in S$ and runs in time $O(N)$.
By Proposition~\ref{palpha}, we can assume that the running time $T$ of
$A$ when the input is the binary representation of $N$ is such that
for large enough values of $N$, $5T + 2 \log T \leq N$.

Our aim is to describe the construction of a first-order sentence
$\phi_A$ which has a model of size $N$ if, and only if, the binary
representation $w_N$ of $N$ is accepted by $A$.  The vocabulary includes a
unary relation $R$ and the idea is that in any model of $\phi_A$ the
substructure defined by $R$ is an encoding of an accepting computation
of $A$ on $w_N$.  The aim is to define this encoding in such a manner
that the size of this encoding is bounded by a linear function of $T$,
the running time of $A$.  In particular, we will see that  $5T + 2
\log T \leq N$ suffices.  

Note that it is straightforward, by axiomatising a suitable $T \times
T$ grid, to give such an encoding that is quadratic in $T$.  That is,
we can axiomatise a model in which the elements are pairs $(x,y)$
representing a tape cell $x$ at time $y$ with unary relations coding
the state and tape contents of the machine.  The
challenge here is to give a linear size construction and we achieve
this by having single elements corresponding to a tape cell $x$ over an
interval of time during which the content of the cell does not change.
This is similar to the idea behind the construction
in~\cite{Gra84,Gra85} which is used to show that any set $S \subseteq
\mathbb{N}$ that can be decided in time $O(N)$ on a nondeterministic
random-access machine is the spectrum of a sentence with unary
function symbols.

Our vocabulary contains: three PIFs $\lfa$, $\rfa$, and $\ufa$; the
unary relation $R$; three additional unary relations  $M$,
$\lma$, $T_\rightarrow$; a unary relation $S_q$ for each state
$q$ of $A$ and a unary relation $L_c$ for each letter $c$ of the
tape alphabet of $A$.

We now describe the intended models of $\phi_A$.  These are encodings
of accepting computations of $A$ on an input $w_N$.  It should be
noted that the encoding is not necessarily unique, but $\phi_A$ does
guarantee that it is a valid encoding and it is of size $N$.

In a model of $\phi_A$, the intention is that each element $e$ in the
relation $R$ represents a pair $x,[y_1, y_2)$ where $x$ is a position
on the tape and $[y_1,y_2)$ (with $y_1 < y_2$) is an interval of
time in the computation.  The meaning of the unary relation symbols,
other than $R$ can then be given as follows.  We have $M(e)$ if, and
only if, $y_2 = y_1 +1$ and at time $y_1$, the machine is reading
position $x$ in the tape; $T_\leftarrow(e)$ if, and only if, position $x$ is to the
left of the position being read by the machine during the time
interval $[y_1,y_2)$ and  $T_\rightarrow(e)$ if, and only if, position $x$ is to the
right of the position being read by the machine during the time
interval $[y_1,y_2)$; $L_c(e)$ if, and only if, the tape cell $x$ during
the interval  $[y_1,y_2)$ contains the letter $c$; and $S_q(e)$ if,
and only if, $M(e)$ and at time $y_1$ the machine is in state $q$.

The PIFs $\lfa$, $\rfa$, and $\ufa$ then connect these elements.  If an
element $e$ corresponds to position $x$ and interval $[y_1, y_2)$,
then $\ufa(e)$ corresponds to position $x$ and interval $[y_2, y_3)$,
for some $y_3$ with 
$y_3 > y_2$.  Thus, $\ufa$ is used to build a list of contents of
the tape at position $x$ during consecutive time intervals.  We
write $\uan$ as an abbreviation for $\ufa^{-1}$.  This enables us to
enforce the requirement that the contents of a tape cell do not change in
between successive visits of the tape head: if $L_c(e)$ holds and
$M(e)$ does not hold, then $L_C(\ufa(e))$ must hold.

The $e$ is an element corresponding to a pair $x,[y_1, y_2)$ and
$\lfa$ is defined at $e$, then $\lfa(e)$ corresponds to a pair
$x-1,[y_1,y_2')$ for some $y_2' \geq y_2$.  That is, $\lfa$ points to
the element on the tape immediately to the left for a time interval
starting at the same time and extending at least as far.  The function
$\lfa$ is defined only at elements $e$ corresponding to positions at
or to the left of the tape head.  That is, either $M(e)$ or  $\lma(e)$
must hold for $\lfa(e)$ to be defined.  The idea is that since $e$
corresponds to a position on the tape to the left of the tape head,
the tape head must pass through this position to get to to the
position immediately to its left.  Thus, the position immediately to
the left is unchanged for at least as long.
Moreover, we require that when $\lfa(e)$ is
defined, and corresponds to the pair $x-1,[y_1,y_2')$ with $y_2' >
y_2$, then $\ufa(e)$ corresponds to the pair $x,[y_2,y_2')$.  Thus,
from $e$, we get to the position immediately to the left during the
same interval, either by taking $\lfa(e)$ if it is defined or
$\lfa(\uan(e))$ otherwise.  For ease of presentation, we introduced
some abbreviations.  We write  $\lmt(e)$ as a shorthand for $M(e) \vee
\lma(e)$, i.e.\ to denote that $e$ represents a position at or to the
left of the tape head. We also write $\Lfa(e)$ to
denote the element  $\lfa(e)$ if it is defined, and $\lfa(\uan(e))$
otherwise, i.e.\ the one that encodes the tape position immediately to
the left of the one represented by $e$, and during the same time
interval. 

The function $\rfa$ is to be interpreted symmetrically, taking an
element $e$ corresponding to the pair  $x,[y_1, y_2)$ to an element
  coding the pair  $x+1,[y_1,y_2')$ for some $y_2' \geq y_2$, provided
    that  $M(e) \vee \rma(e)$ holds.  Again, we 
 write $\rmt(e)$ as a shorthand for $M(e) \vee \rma(e)$ and we write $\Rfa(e)$ to
denote the element  $\rfa(e)$ if it is defined, and $\rfa(\uan(e))$
otherwise.

We assume that there are two special elements marking the first and
last positions on the tape and that the associated time intervals
cover the whole computation (i.e.\ these tape cells are not changed).
In what follows, we refer to these two special elements as $l$ and $r$
respectively. 

We now wish to express the following condition we refer to later as
$(\star)$:
\begin{quote}
Suppose that $M(e)$ and  $\ldots, \Lfa^2(e), \Lfa(e), e, \Rfa(e),
\Rfa^2(e), \ldots$, along with the relations $S_q$ and $L_c$ on these
elements,   describes a configuration of $A$.  Then for some $e'$
($e'=\ufa(\Rfa(e))$ if in this configuration $A$ moves right, and
$e'=\ufa(\Lfa(e))$ if $A$ moves left), the sequence $\ldots, \Lfa^2(e'),
\Lfa(e'), e', \Rfa(e'), \Rfa^2(e'), \ldots$ describes the next
configuration of $A$.
\end{quote} 
The following axioms ensure this property.  In all these, quantifiers
are relativized to the set $R$.

\begin{itemize}
\item $M, \lma, \rma$ is a partition of $R$;
\item $\{L_c: c \in \Sigma\}$ is a partition of $R$;
\item $\{S_q\}$ is a partition of $\{e: M(e)\}$;
\item if $\rma(e)$ then $\lfa(e)$ is not defined, otherwise $\Lfa(e)$ is
  defined except when $e=l$; similarly, if $\lma(e)$ then $\rfa(e)$ is
  not defined, otherwise $\Rfa(e)$ is defined, unless $e=r$;
\item the tape content does not change except under the tape head:
if $L_c(e)$ and not $M(e)$ and $\ufa(e)$ is defined then $L_c(\ufa(e))$;
\item correctness of the right side of the tape (height 1):
if $M(e)$ or $\rma(e)$, and $\rma(\ufa(e))$, and $\rfa(e)$ is defined, and $\rfa(\ufa(e))$ is defined, then $\ufa(\rfa(e)) = \rfa(\ufa(e))$;

\item correctness of the right side of the tape (height 2):
if $M(e)$ or $\rma(e)$, and $\ufa(\ufa(e))$ is defined, and $\rfa(e)$ is
defined, and $\rfa(\ufa(e))$ is not defined, then $\ufa(\rfa(e)) = \rfa(\ufa(\ufa(e)))$,

\item no greater height: if $\ufa(e)$ is defined and $M(e)$ or $\rma(e)$, then $\rfa(e)$ or $\rfa(\ufa(e))$ is defined;

\item symmetric axioms for correctness of the left side of the tape;

\item time zero: there is only one $e$ such that $M(e)$ and $\uan(e)$ is not defined; if $\uan(e)$ is not defined, then $\uan(\rfa(e))$ and $\uan(\lfa(e))$ are also not defined;

\item end of computation: there is only one $e$ such that $M(e)$ and
  $\ufa(e)$ is not defined and  for this $e$, $S_q(e)$ for some final
  state $q$;  if $\ufa(e)$ is not defined and $\Lfa(e)$ is defined then $\ufa(\Lfa(e))$ is not; if $\ufa(e)$ is not defined and $\Rfa(e)$ is defined then $\ufa(\Rfa(e))$ is not;

\item correctness of computation: if $M(e)$ and $S_q(e)$ and $q$ is
  not a final state, then either $M(\ufa(\Rfa(e)))$ and $\ufa(e) =
  \lfa(\ufa(\Rfa(e)))$, or $M(\ufa(\Lfa(e)))$ and $\ufa(e) =
  \rfa(\ufa(\Lfa(e)))$; moreover, if $L_c(e)$ and $L_{c'}(\ufa(e))$, then
  there is a valid transition of $A$ in state $q$ to replace $c$
  by $c'$ and move left or right as appropriate.
\end{itemize}

The construction is illustrated in Figure~\ref{tmsim}.  In this
figure, the large red circles depict elements $e$ for which $M(e)$
holds while the smaller green circles represent other elements $e$.
The upward pointing arrows depict the partial function $\ufa$, the
rightward pointing arrows depict the partial function $\rfa$ and the
leftward pointing arrows depict the partial function $\lfa$.
\begin{figure}
\includegraphics{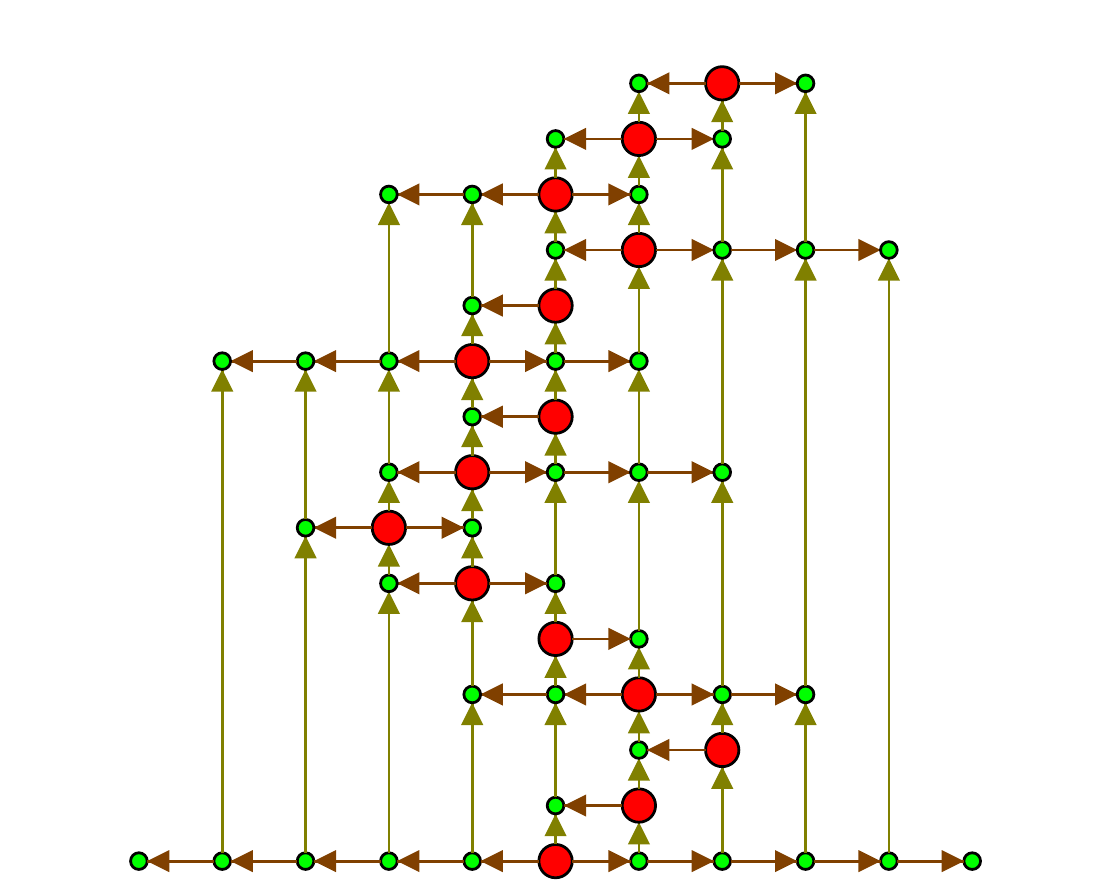}
\caption{Model coding the computation of a machine.}
\label{tmsim}
\end{figure}

In addition to the above axioms (which ensure condition 
 $(\star)$), we can use the construction from  Example~\ref{exbinary}
to include in $\phi_A$ the condition that the initial contents of the
tape are the binary encoding of the size of the model itself.  This
ensures that if $\phi_A$ has a model of size $N$ then $A$ accepts the
binary encoding of $N$.  The vocabulary contains in all six PIF
symbols: three from the construction above, two required by the
construction  from Theorem~\ref{makeorder} and one additional one for
the construction in  Example~\ref{exbinary}.

Now, we need to show that if the binary encoding $w_N$ of $N$ is accepted by
$A$, then there is a model of $\phi_A$ of size $N$.  Suppose that the
accepting computation of $A$ on input $w_N$ takes time $T$ and space
$S$.  We show that we can construct a model satisfying the axioms
above with at most $S + 4T + 2\log T$ elements.  Since we can assume
$S \leq T$ and by assumption $5T + 2 \log T \leq N$, this model has
fewer than $N$ elements.  We can then obtain a model of $\phi_A$ with
exactly $N$ elements by adding additional elements that are not
included in $R$.

We construct, from the accepting computation of $A$, a model of the
appropriate size by the following iterative procedure:

\begin{itemize}
\item We have $S + 2 \log T$ elements coding the initial contents of
  the tape, along with $\log T$ blank cells on either side of it.
  These are connected through the functions $\lfa$ and $\rfa$.

\item For each subsequent configuration of $A$, we add a layer of
  elements, which are connected to previous layers using $\ufa$.  The
  new layer will always contain two elements corresponding to the positions
  in the tape where the tape head was in the previous configuration
  and the position where it is in the new configuration.  In addition
  to these two, we add new elements to the layer by the following
  rule: if $e$ has been added to the new layer,  $\rmt(e)$ holds and
  $\Rfa(e)$ is not defined, then we add a new element which will be
  $\rfa(e)$.  Similarly, if  $\lmt(e)$ holds and $\Lfa(e)$ is not
  defined, we add an element to be $\lfa(e)$.
\end{itemize}
This is illustrated in  Figure~\ref{tmsim} which was obtained
by running this procedure.  In this case, $S=5$ and $T=14$, but two
cells of padding on either side suffice.

To obtain a bound on the size of this model, we note that the bottom
layer contains $S + 2\log T$ elements.  We want to show that the total
number of elements in the additional layers is bounded by $4T$.  We
cannot put an absolute bound on the number of elements added in each
layer, but we use an amortized analysis to obtain the required bound.
For each element $e$ for which $M(e)$ holds, let the potential
$\Phi(e)$ be defined to be  the number of elements in the sequence $e,
\Rfa(e), \Rfa^2(e), \ldots$ where $\rfa(e)$ is not defined, plus the
number of elements in the sequence $e, \Lfa(e), \Lfa^2(e), \ldots$ where
$\lfa(e)$ is not defined.   If we add the number of new elements
created in a layer to the change of potential $\Phi(e)$, we can see
that the result will be always at most 4.   Initially, the potential
is 2. Thus the number of positions added in $T$ steps is at most
$4T$.  It can be also easily checked that the padding of length $\log
T$ at each end of the tape is always sufficient.
\end{proof}

The construction in the proof can also be understood as simulating not
a single tape machine, but an automaton with two stacks (given by
$\lma$ and $\rma$).   It is straightforward to generalize this
construction to more stacks, or equivalently, more tapes. 

\begin{cor}\label{simpcor}
$\binntime(N) \subseteq \BDSpec_3$.
\end{cor}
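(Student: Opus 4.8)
The plan is to combine Theorem~\ref{mainlower} with Theorem~\ref{thsimplify} essentially off the shelf. By Theorem~\ref{mainlower}, any $S \in \binntime(N)$ lies in $\PIFSpec_6$, so there is a sentence $\phi$ over a vocabulary with $6$ PIF symbols and some unary relations whose spectrum is $S$. Applying Theorem~\ref{thsimplify} with $d = 6$ immediately gives that $\{12n + l : n \in S\}$ is in $\BDSpec_3$ for any $l$; in particular, taking $l = 0$, the set $12 \cdot S := \{12 n : n \in S\}$ is a bounded-degree (degree $3$) spectrum. The remaining work is to pass from this ``rescaled'' spectrum back to $S$ itself, which is exactly the kind of bookkeeping that Proposition~\ref{propclosure} is designed for.

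First I would observe that $\binntime(N)$ is closed under the coordinate change $S \mapsto \{12 n : n \in S\}$ and its inverse, so it suffices to show that $\binntime(N) \subseteq \BDSpec_3$ after noting that every set of the form $12 \cdot S$ with $S \in \binntime(N)$ is again (a subset whose membership is decidable in linear time in the argument, hence) in $\binntime(N)$; but more directly, given an arbitrary $S \in \binntime(N)$ we do not even need this: we apply Theorem~\ref{mainlower} to $S$, get $S \in \PIFSpec_6$, and then apply Theorem~\ref{thsimplify} to land $12 \cdot S$ in $\BDSpec_3$. The point is that the corollary as stated asks for $\binntime(N) \subseteq \BDSpec_3$, i.e.\ every linear-time-decidable set is \emph{itself} a degree-$3$ spectrum, so we must recover $S$ from $12 \cdot S$. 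Here the obstacle is that $\BDSpec_3$ has no a priori closure under the operation $\{12 n : n \in S\} \mapsto S$: dividing cardinalities by $12$ is not something a first-order sentence can do. The way around this, which I expect to be the crux of the argument, is to redo the reduction of Theorem~\ref{mainlower} so that the target is $S$ directly rather than a blown-up copy — i.e.\ to interleave the gadget construction of Theorem~\ref{thsimplify} with the Turing-machine encoding rather than composing them as black boxes.

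Concretely, the cleanest route is: given $S \in \binntime(N)$, first note that $S' = \{\,n : 12n \in S\,\} \cup (\text{correction for residues})$ is not what we want; instead, observe that $\binntime(N)$ is closed under $S \mapsto S'' := \{\,12 n : n \in S\,\}$ \emph{only if} we also handle the non-multiples of $12$, which a linear-time machine can do trivially by rejecting. Then $S'' \in \binntime(N)$, so by Theorem~\ref{mainlower} and Theorem~\ref{thsimplify}, $\{12 m + l : m \in S''\} \in \BDSpec_3$; but this is $\{144 k + l : k \in S\}$, which has only pushed the problem out. The honest fix — and the one I would write up — is to inspect the proof of Theorem~\ref{mainlower}: its models already have bounded degree, and the only reason it is phrased with $6$ PIFs is for convenience; replacing each element by the path gadget of Theorem~\ref{thsimplify} turns the $\PIFSpec_6$ model on $N$ elements into a $\BDSpec_3$ model on $12 N + l$ elements, and by choosing which of the linearly many padding elements (the elements outside $R$, of which there are at least $N - (5T + 2\log T) \ge 0$ many, plus the $l$ isolated vertices we are free to add) to include, we can hit \emph{any} cardinality that is $\ge$ the size of the gadgetised computation encoding. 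Since that encoding has size $O(T) = O(N)$ and we may rescale $T$ down by a constant via Proposition~\ref{palpha}, for each $N \in S$ we get a degree-$3$ model of exactly $N$ elements, and for $N \notin S$ no such model exists; i.e.\ $S \in \BDSpec_3$. The main obstacle, then, is precisely the arithmetic of cardinalities: making sure the gadget overhead plus available padding spans an interval reaching all the way down to the desired $N$, which is why one appeals to the speed-up of Proposition~\ref{palpha} to make the computation encoding use only, say, $N/100$ elements, leaving ample room for padding.
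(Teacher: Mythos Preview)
Your proposal never reaches a working argument. The final route you settle on---open up Theorem~\ref{mainlower}, gadgetise each element via Theorem~\ref{thsimplify}, then pad freely with isolated vertices so as to ``hit any cardinality that is $\ge$ the size of the gadgetised computation encoding''---breaks the mechanism by which the model knows its own size. In the proof of Theorem~\ref{mainlower} there are no free padding elements: \emph{every} element of the model lies on the spiral of Theorem~\ref{makeorder}, and it is precisely this spiral (together with Example~\ref{exbinary}) that writes the binary representation of the model's cardinality onto the input tape. If you gadgetise the spiral, a model on $N$ elements corresponds to a spiral on $N/12$ elements, so the input fed to the machine is the binary of $N/12$, not of $N$; and if you additionally allow arbitrary isolated padding vertices, the spiral no longer determines the total size at all, so the formula has no way to force the machine's input to equal the actual cardinality. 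Your claim that ``for $N \notin S$ no such model exists'' therefore does not follow.

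The paper's proof avoids this by the simple residue trick you circle around but never land on. For each $l \in \{0,\ldots,11\}$ set $S_l = \{n : 12n+l \in S\}$; each $S_l$ is in $\binntime(N)$, hence in $\PIFSpec_6$ by Theorem~\ref{mainlower}, hence $\{12n+l : n \in S_l\} = \{m \in S : m \equiv l \pmod{12}\}$ is in $\BDSpec_3$ by Theorem~\ref{thsimplify}. Taking the disjunction of the twelve witnessing sentences gives a sentence with spectrum $S$. The key move you missed is to apply Theorem~\ref{mainlower} not to $S$ itself but to the twelve shifted sets $S_l$, so that the $\times 12$ blow-up of Theorem~\ref{thsimplify} lands you back exactly on the residue classes of $S$.
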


\begin{proof}
Let $S \in \binntime(n)$. For each $l = 0, \ldots, 11$, let $S_l = \{n:
12n+l \in S\}$. The set $S_l$ is also in $\binntime(n)$. From Theorem
\ref{mainlower}, $S_l$ is in $\PIFSpec_6$. From Theorem
\ref{thsimplify}, the set $S_l' = \{n: n \in S \wedge n \bmod 12 = l\}$ is
in $\BDSpec_3$. Since a union of spectra is also a spectrum (if 
$\spec(\phi_i) = S_i,$ then $\spec(\bigvee \phi_i) = \bigcup S_i$), 
also $S \in \BDSpec_3$.
\end{proof}

The construction we have given uses a large number of unary
relations.  Although Corollary~\ref{simpcor} shows that we can reduce
the degree of the models to 3, it is unclear whether we can also
reduce the number of unary relations required.

\section{Upper bound: spectra via Turing machines}

In this section we aim to establish an upper bound on the class of
bounded degree spectra.  To be precise, we show that any such spectrum
can be recognised by a nondeterministic Turing machine in time $O(N
(\log N)^2)$.  One consequence of this is that there are spectra that
are not bounded-degree spectra. 

To establish the result, we need to show that for any first-order
sentence $\phi$, we can construct a nondeterministic machine $A_\phi$
that given a positive integer $N$ as input (it does not much matter
whether $N$ is given in unary or binary as we allow a running time
that is greater than the value of $N$) will decide whether or not
$\phi$ has a model with exactly $N$ elements.  We assume, for
simplicity, that $\phi$ is in a vocabulary with one binary relation
$E$ and a number of unary relations $R_1, \ldots, R_k$.  We can think
of structures in this vocabulary as coloured graphs and we use the
language of graphs to describe them. 

The machine $A_\phi$ proceeds by nondeterministically guessing a graph
$G$ with $N$ vertices and degree bounded by $d$ and then verifying
that $G$ is indeed a model of $\phi$.  The algorithm for deciding
whether $G$ satisfies $\phi$ relies on Hanf's locality theorem~\cite{hanf}
(see also~\cite[Theorem~4.12]{Lib04}).  For a vertex $v$ in $G$ and a positive integer
$r$, we write $N_r(v)$ to denote the substructure of $G$ induced by
the set of vertices with distance at most $r$ to $v$, with a
distinguished new constant interpreted by $v$ itself.  Note that if
the degree of $G$ is at most $d$, then $N_r(v)$ has at most $1
+d(d-1)^{r-1}$ vertices in it.  We write $S_r$ to denote the number  $1
+d(d-1)^{r-1}$.  We denote by $\tau(v)$ the \emph{isomorphism type} of
the structure $N_r(v)$.  To be precise, we can take $\tau(v)$ to be
a canonical structure isomorphic to $N_r(v)$ on the domain
$\{1, \ldots, n_\tau\}$ where $n_\tau = |N_r(v)|$.  Since $n_\tau \leq
S_r$, there are only finitely many distinct types.  We write $\Tt$ for
the set of all types.  For a fixed positive integer $M$, let
$f_{r,M}(G): \Tt \ra \{0,\ldots,M\}$ be the function that assigns to
each type $\tau \in \Tt$ the minimum of $M$ and the number of vertices
$v$ in $G$ with $\tau(v) = \tau$.  For a pair of graphs $G$ and $H$,
we write $G \sim_{r,M} H$ if $f_{r,M}(G) = f_{r,M}(H)$.
 Then, Hanf's locality theorem can
be stated as follows.
\begin{thm}[Hanf's locality\cite{hanf}]\label{hanf}
Let $\phi$ be a FO formula.  Then there exist integers $r$ and $M$
such that, if $G \sim_{r,M} H$ then $G\models \phi$ if, and only if
$H \models \phi$.
\end{thm}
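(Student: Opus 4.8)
The plan is to deduce the statement from the classical Ehrenfeucht--Fra\"iss\'e method. Write $k$ for the quantifier rank of $\phi$. Since every structure under consideration is a graph of degree at most the fixed bound $d$, the set $\Tt$ of $r$-neighbourhood types is finite for each $r$ and the counting functions $f_{r,M}$ make sense. It then suffices to produce $r$ and $M$ such that $G \sim_{r,M} H$ forces Duplicator to win the $k$-round Ehrenfeucht--Fra\"iss\'e game on $G$ and $H$: for then $G$ and $H$ agree on every sentence of quantifier rank at most $k$, in particular on $\phi$.

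First I would fix the parameters. Define a strictly decreasing sequence of radii $r_k = 1 < r_{k-1} < \ldots < r_0$ by $r_i = 3 r_{i+1} + 2$, so that $r_0 = O(3^k)$, and set $r = r_0$ and $M = k(S_{2 r_0} + 1)$. A routine preliminary observation is that the $r$-type of a vertex determines its $r'$-type for every $r' \le r$ (a smaller neighbourhood is an induced substructure of a larger one), so that $G \sim_{r,M} H$ implies $f_{r_i, M}(G) = f_{r_i, M}(H)$ for all $i \le k$; concretely, for each radius $r_i$ and each type, $G$ and $H$ contain either the same number of vertices of that type, or at least $M$ of them on both sides.

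The heart of the proof is a strategy for Duplicator maintaining the invariant: after round $i$, with pebbles $a_1, \ldots, a_i$ on $G$ and $b_1, \ldots, b_i$ on $H$, the correspondence $a_j \mapsto b_j$ extends to an isomorphism $g_i \colon N_{r_i}(a_1, \ldots, a_i) \to N_{r_i}(b_1, \ldots, b_i)$. Suppose in round $i+1$ Spoiler plays $a_{i+1} \in G$ (the other case being symmetric). If $a_{i+1}$ is within distance $2 r_{i+1}$ of some $a_j$, then the $r_{i+1}$-ball around $a_{i+1}$ lies inside $N_{r_i}(a_1, \ldots, a_i)$ (using $r_i \ge 3 r_{i+1}$), and Duplicator answers $b_{i+1} = g_i(a_{i+1})$, taking $g_{i+1}$ to be the restriction of $g_i$. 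If instead $a_{i+1}$ is at distance more than $2 r_{i+1} + 1$ from every $a_j$, its $r_{i+1}$-ball is disjoint from, and joined by no edge to, those of the $a_j$; Duplicator must then find $b_{i+1} \in H$ of the same $r_{i+1}$-type as $a_{i+1}$ and at distance more than $2 r_{i+1} + 1$ from every $b_j$, and set $g_{i+1}$ to be $g_i$ together with an isomorphism of the two new $r_{i+1}$-balls.

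The hard part is ensuring that this fresh vertex $b_{i+1}$ always exists, and this is precisely what the threshold $M$ is for. The vertices of $H$ to be avoided (those within distance $2 r_{i+1} + 1$ of some $b_j$) number at most $i \cdot S_{2 r_{i+1}+1} < M$. If $G$ has at least $M$ vertices of the required $r_{i+1}$-type then so does $H$, and at least one of them avoids the forbidden region. Otherwise $G$ and $H$ have \emph{equally many} vertices of that type; applying $g_i$ and $g_i^{-1}$ to the $r_{i+1}$-balls of vertices near the pebbles (these balls lie inside the respective $r_i$-neighbourhoods) shows that $G$ and $H$ have equally many vertices of that type near their pebbles, while $G$ has the extra witness $a_{i+1}$ away from its pebbles, so $H$ too must contain a vertex of that type away from its pebbles. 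After $k$ rounds the invariant at radius $r_k = 1$ says that $a_j \mapsto b_j$ is a partial isomorphism, so Duplicator wins. The only remaining technical care — reconciling distances measured inside an induced neighbourhood with distances in the ambient graph — is handled in the usual way (\cf~\cite[Theorem~4.12]{Lib04}), for which the generous geometric gap between consecutive radii leaves ample room.
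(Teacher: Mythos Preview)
The paper does not supply its own proof of this theorem; it is quoted as a classical result from~\cite{hanf}, with a pointer to~\cite[Theorem~4.12]{Lib04} for a textbook treatment, and is then used as a black box in the proof of Theorem~\ref{mainupper}. Your proposal reproduces precisely the standard Ehrenfeucht--Fra\"iss\'e back-and-forth argument that underlies that textbook proof, so there is nothing to compare: what you wrote is essentially the proof one finds in the cited reference.

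One cosmetic point: as written, your ``near'' case (distance at most $2r_{i+1}$) and your ``far'' case (distance greater than $2r_{i+1}+1$) leave distance exactly $2r_{i+1}+1$ uncovered. Extending the near case to distances at most $2r_{i+1}+1$ fixes this and is harmless, since then $N_{r_{i+1}}(a_{i+1}) \subseteq N_{3r_{i+1}+1}(a_j) \subseteq N_{r_i}(a_j)$ by your choice $r_i = 3r_{i+1}+2$.
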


Now, we describe how this can be used to construct the machine $A_\phi$
as required in the proof of the main theorem.

\begin{thm}\label{mainupper}
$\BDSpec_d \subseteq \binntime(N \log^2 N)$.
\end{thm}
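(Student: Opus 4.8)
The plan is to follow the blueprint laid out in the text preceding the statement: given a first-order sentence $\phi$ over a vocabulary with one binary relation $E$ and unary relations $R_1,\dots,R_k$, all of whose models have degree at most $d$, construct a nondeterministic machine $A_\phi$ that on input $N$ (in binary) guesses a candidate degree-$\leq d$ coloured graph $G$ on $N$ vertices and verifies $G\models\phi$, accepting iff some such $G$ exists. The verification does not inspect $\phi$ directly but uses Hanf's locality theorem (Theorem~\ref{hanf}) to fix radius $r$ and threshold $M$ depending only on $\phi$, and then checks $\phi$ by computing, for each vertex $v$, the isomorphism type $\tau(v)$ of its $r$-neighbourhood $N_r(v)$, tabulating the truncated histogram $f_{r,M}(G):\Tt\to\{0,\dots,M\}$, and comparing it against the (finitely precomputed) histograms of actual models of $\phi$. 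Since $r$, $M$, $S_r$ and $|\Tt|$ are all constants once $\phi$ is fixed, the combinatorial work per vertex is $O(1)$; the entire cost is dominated by the bookkeeping needed to move around a graph of size $N$ that does not fit in a constant number of registers.

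First I would fix a representation of $G$ on the tape: an array of $N$ records, each holding the unary-colour bits of one vertex and, for each of its $\leq d$ incident edges, a pointer (a $\lceil\log N\rceil$-bit index) to the other endpoint. Guessing $G$ nondeterministically and checking the local consistency of these pointers (each claimed edge is reciprocated, degrees are respected, $E$ is symmetric) costs $O(N\log N)$. Next, for a single vertex $v$, computing $\tau(v)$ amounts to a bounded breadth-first search of depth $r$ from $v$: this visits at most $S_r=1+d(d-1)^{r-1}$ vertices, so it follows $O(1)$ pointers, but each pointer dereference requires scanning to an arbitrary tape position addressed by a $\log N$-bit index, costing $O(\log N)$ on a multi-tape machine (binary search or a standard index-to-position routine). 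Hence one type computation costs $O(\log N)$, and doing it for all $N$ vertices costs $O(N\log N)$. The remaining factor of $\log N$ comes from maintaining the histogram $f_{r,M}$: each of the $O(1)$ type-counters is a number up to $M$ (a constant), so that is cheap, but one must first canonicalise each computed $N_r(v)$ to one of the finitely many types in $\Tt$, and more importantly, to compare and total counts across all $N$ vertices one makes $N$ updates each touching an $O(\log N)$-bit addressed location — giving $O(N\log^2 N)$ overall. Finally $A_\phi$ compares the resulting histogram with the constant-size table of Hanf-invariants of models of $\phi$ and accepts accordingly; correctness is immediate from Theorem~\ref{hanf}, since $G\models\phi$ iff $G\sim_{r,M}H$ for some model $H$ of $\phi$, and the histogram records exactly the $\sim_{r,M}$-class of $G$.

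The main obstacle is purely one of careful resource accounting on the Turing-machine model rather than any conceptual difficulty: I must ensure that the $N$ pointer-dereferences and histogram updates genuinely fit in $O(N\log^2 N)$ and not $O(N^2)$ or $O(N\log^3 N)$. The delicate point is that naive tape management would cost $O(N)$ per random access (scanning the whole array), which is fatal; instead one uses a multi-tape layout where the graph sits sorted by vertex index so that reaching index $i$ takes $O(\log N)$ via the standard doubling/binary-search addressing trick, and one processes vertices in order so that the BFS from $v$ only ever jumps $O(1)$ times. Proposition~\ref{palpha} and Proposition~\ref{propclosure} then let us absorb constant factors and finitely many small exceptional cardinalities, so the bound $O(N\log^2 N)$ suffices and places $\BDSpec_d$ inside $\binntime(N\log^2 N)$, which by the nondeterministic time hierarchy theorem is strictly below $\NE$.
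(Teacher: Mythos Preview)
Your high-level plan---guess a bounded-degree coloured graph, compute the $r$-neighbourhood type of every vertex, tabulate the Hanf histogram $f_{r,M}$, and look it up---is the same as the paper's, and correctness via Theorem~\ref{hanf} is fine. The problem is entirely in the resource accounting, and it is a real gap rather than a matter of polish.

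The sentence ``reaching index $i$ takes $O(\log N)$ via the standard doubling/binary-search addressing trick'' is where the argument fails. On a multi-tape Turing machine there is no such trick: the head moves one cell per step, and dereferencing a pointer into an array of $N$ records of size $O(\log N)$ each costs $\Theta(N\log N)$ in the worst case, not $O(\log N)$. Binary search lets you \emph{decide} where to go in $O(\log N)$ comparisons, but each comparison still requires physically moving the head across a large stretch of tape. With $O(1)$ dereferences per vertex and $N$ vertices you are at $\Theta(N^2\log N)$, which is exactly the ``naive tape management'' cost you correctly flagged and then incorrectly claimed to have avoided. (Your separate accounting for the histogram is also off: the histogram has $|\Tt|$ constant-size entries, so maintaining it is $O(N)$ total and does not explain any $\log N$ factor.)

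The paper gets around random access not by speeding it up but by eliminating it. It nondeterministically guesses the type $\tau(v)$ for every $v$ together with the list of actual vertex identifiers realising it, and then \emph{sorts} rather than dereferences: for each of the $S_r$ positions in a type, it lays down on a dedicated tape the full vertex descriptions in the order needed, using a $\log N$-round nondeterministic radix-sort style permutation (each round is one linear scan of $O(N\log N)$ data, for $O(N\log^2 N)$ total). After this, a single synchronous left-to-right pass over tape $A$, tape $B$, and the $S_r$ auxiliary tapes checks, for every $v$ simultaneously, that the guessed $\tau(v)$ matches the actual neighbourhood in $G$. That sorting step is the missing idea in your argument. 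Alternatively, the paper observes one can cite Seese's linear-time RAM model-checking for bounded-degree structures together with Monien's simulation of nondeterministic linear-time RAMs in $\binntime(N\log^2 N)$; your write-up is implicitly assuming the latter simulation without stating or proving it.
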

\begin{proof}
Given a first-order sentence $\phi$, we describe a multi-tape non-deterministic
machine $A_\phi$ which given an input $N$ decides if $\phi$ has a
model with exactly $N$ elements.  

First, $A_\phi$ non-deterministically guesses a graph $G$ on $N$
vertices by writing its description on tape $A$.  The description of
$G$ consists of a list of $N$ \emph{vertex descriptions}, where each vertex
description consists of an identifier of a vertex $v$ (which is an
integer given by at most $\log N$ bits), a list of the unary
predicates satisfied by $v$ and a list of the identifiers of the
neighbours of $v$.  Note that each vertex description is $O(\log N)$
bits long, because the number of neighbours of $v$ is bounded by $d$.
Thus, the total length of the description of $G$ is $O(N \log N)$.  We
can assume that the vertex identifiers are exactly $\{1,\ldots,N\}$
and the vertex descriptions are enumerated in increasing order of the
identifier. 

Secondly, the machine $A_\phi$ non-deterministically guesses the type
$\tau(v)$ of each vertex.  To be precise, it writes on tape $B$, for
each vertex $v$ the type $\tau(v)$ (as there are only a bounded number
of types, this can be specified in a constant amount of space) and a
list of identifiers of vertices of $G$ that correspond to the elements
$\{1, \ldots, n_\tau\}$.  Note that the entire list has length
$O(N \log N)$.  To complete its task, $A_\phi$ needs to verify two
things: (1) that the guessed type of each vertex $v$ on tape $B$ is
consistent with the graph $G$ described on tape $A$; and (2) that the
list of types on tape $2$ does ensure that $G \models \phi$.  Since
the list of types determines $f_{r,M}$, and the latter determines
whether $G\models \phi$, (2) is a simple table look up.  We now
describe how $A_\phi$ can, non-deterministically, in time $O(N \log^2
N)$ perform task (1).

The essential idea is that we produce on a series of $S_r$ additional
tapes copies of the vertex descriptions indexed in a suitable way so
that we have available for each vertex $v$, not only the vertex
identifiers of the vertices in its neighbourhood but their full vertex
descriptions. 
That is, tape $i$ will contain for each vertex $v$, a complete vertex
description of the $i$th vertex that occurs in $\tau(v)$ and these
will be enumerated in increasing order of $v$.  To achieve this, we
first scan tape $A$ and for each vertex description $v$ we encounter,
we guess the number $k$ (which may be $0$ and is at most $S_r$) of vertices $w$ such
that $v$ is vertex number $i$ in the description of $w$, and copy the
vertex description of $v$ on to tape $i$ $k$ times.  We then
non-deterministically generate a permutation of the vertex
descriptions on tape $i$.
 This can be done in time proportional to the length of tape $i$ times
 $\log(N)$, which is $O(N \log^2(N))$.  This is done by guessing a
 subset $S$ of the set of vertex descriptions, splitting the list into
 elements of $S$ and non-elements of $S$ (without changing order), and
 merging them again so that all elements of $S$ are after all
 non-elements of $S$.  Any permutation can be obtained through
 $\log(N)$ such operations.  This sorting technique is similar to the
 radix sort algorithm \cite[Sec.~9.3]{CLR89}.  Finally, we check that the
 permutation generated is indeed correct by making an additional scan
 of tape $A$ and checking that the vertex identifier that occurs in
 position $i$ in the type description of vertex $v$ matches that of
 the vertex description that occurs in position $v$ on tape $i$.

Tapes $1$ through $S_r$ now provide a suitably indexed table of vertex
descriptions and it is straightforward to check that the type
$\tau(v)$ ascribed to each vertex $v$ on tape $B$ is consistent with
the graph description on tape $A$.  Indeed, as we simultaneously scan
tapes $1$ through $S_r$, at position $v$ we have access to the full vertex
descriptions of all vertices occurring in the vertex description of
$v$ and we can check that $N_r(v)$ is indeed isomorphic to $\tau(v)$.
\end{proof}

We know, as a result of the time hierarchy theorem for
non-deterministic machines that $\binntime(N \log^2 N)$ is a proper
subset of $\NE$.  Since the latter is the class of all spectra, it
follows from Theorem~\ref{mainupper} that there are spectra that are
not in $\BDSpec_d$ for any $d$.

The proof of Theorem~\ref{mainupper} is based on the construction due
to Seese~\cite{See96} that shows that, for any fixed first-order
sentence $\phi$, the problem of deciding whether a given graph on $N$
nodes and degree bounded by a constant $d$ satisfies $\phi$ is
solvable in time linear in $N$ by a deterministic random-access
machine.  It follows that determining whether $\phi$ has a model that
is a graph on $N$ vertices with degree bounded by $d$ can be done by a
nondeterministic random-access machine in time $O(N)$.  Such a machine
can first guess the description of such a graph in time $O(N)$, since
the degree bound ensures that the graph has a description whose length
is bounded by $O(N)$.  Our proof aims at reconstructing this argument
in the context of multi-tape Turing machines.  This can also be
established by a general result due to Monien~\cite{Monien77} which
shows that any set decidable in linear time by a nondeterministic
random-access machine is in $\binntime(N \log^2 N)$.

\section{Assuming planarity}\label{secplanar}

In this section we investigate to what extent upper and lower bounds
similar to those in the previous two sections can be established in
the case where we only consider planar structures.  That is, we are
interested in the \emph{planar spectrum} of a formula $\phi$---the set
of those integers $n$ such that there is a model of $\phi$ of size $n$
which is also planar.

As we noted earlier, the models of the formula $\phi_M$ constructed in
the proof of Theorem~\ref{makeorder} are all planar.  Moreover,
Example~\ref{expowersoftwo} yields a formula whose spectrum is the set
of powers of two, and again all models are planar.  The more
complicated examples of Example~\ref{exmultiply} and~\ref{exbinary}
do not yield planar models.  However, Example~\ref{exfib} gives
us a simple formula whose planar spectrum is the set of Fibonacci numbers.

The construction of a model encoding the computation of a Turing
machine that we used in the proof of  Theorem~\ref{mainlower} yields a
planar graph as long as the machine only uses a single tape (or more
precisely two stacks).  But, the proof also relies on encoding in the
formula the statement that the initial contents of the tape encode (in
binary) exactly the size of the structure.  This relies on the
construction in Example~\ref{exbinary} and the models are no longer
planar.  Instead, we can modify the construction so that the size of
the model is determined by the length $S$ of the tape and the number
$T$ of steps taken.  Thus, we can easily modify the construction so
that at each time step, exactly four additional elements are used, by
spreading the amortized cost in the proof of Theorem~\ref{mainlower}
over the time steps.  Specifically, we axiomatize the cases where the
potential $\Phi$ is reduced and ensure there are additional dummy
elements in these cases.  Thus, as long as all elements in the
relation $R$ are part of the coding of the computation, there will be
exactly $S+4T$ of them.  As the initial tape contents in the machine
form the bottom line (see Figure~\ref{tmsim} above), we can add
gadgets below this line, without violating planarity, which allow us
to formulate statements in the formula about the length $S$ and the
contents of the tape.

This idea relies on the assumption that everything in $R$ forms part
of the machine computation.  This is tricky to enforce as illustrated
by Figure~\ref{tmsim-perio}.  Note that this picture has been
rotated 90 degrees in order to fit the page and now the time steps run
from left to right and a single column connected vertically represents the tape contents.  The picture illustrates the coding of a
machine that alternately moves six steps to the right on the tape and then
six steps to the left.

\begin{figure}
\includegraphics{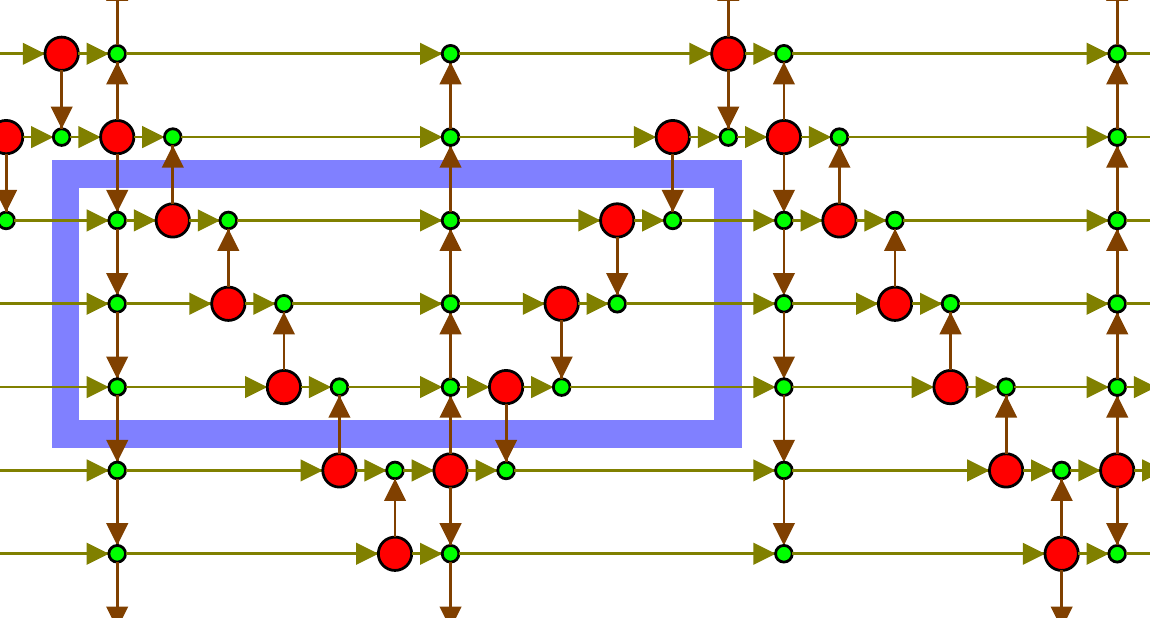}
\caption{Machine simulation with section for tiling.}
\label{tmsim-perio}
\end{figure}
In this picture, the area bounded by the blue rectangle defines a
subgraph which can be used to tile the plane.  That is to say, we can
take infinitely many copies of this subgraph and connect them by
identifying the edges leaving at the top with the incoming edges in
the copy above and similarly to the left and right.  This yields an
infinite graph $G$ which satisfies all the axioms apart from the edge
cases which talk of elements that appear only once in the coding of
the computation.  Hence, the disjoint union of a large enough model of
the axioms with $G$ still yields a model (albeit an infinite one).
Furthermore if, instead of tiling the plane, we use the rectangle to
tile a torus, we obtain a finite graph any number of copies of which
can be adjoined to a model to obtain a valid model.  This means that
the spectrum of our sentence must be ultimately periodic.  However,
the embedding of the tiling on a torus is necessarily not planar and
restricting ourselves to planar graphs, it turns out that we can
ensure that all elements in $R$ really are part of the coding of the
computation and this allows us to axiomatise valid computations.  In
the next two subsections, we carry out two such constructions, for two
different classes of machines.  In terms of complexity, the two
results are incomparable.

\subsection{A Turing machine with a clock}

\begin{thm}\label{clocksim}
$\binntime^S(N/\log N) \subseteq \PPifSpec_3$.
\end{thm}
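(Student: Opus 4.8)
The plan is to combine the planarity-preserving machine simulation sketched in the preamble to this subsection with a self-contained ``clock'' gadget that pins down the size of the model. Recall that the construction behind Theorem~\ref{mainlower} already gives, for a single-tape machine $A$ running in time $T$ and space $S$, a planar graph of the shape in Figure~\ref{tmsim}; by spreading the amortized cost over the time steps (axiomatizing the moments when the potential $\Phi$ drops and inserting the requisite dummy elements there) we may assume the $R$-part has exactly $S+4T$ elements, all genuinely part of the computation, \emph{provided} we can rule out spurious toroidal components. On a planar model the toroidal tilings described above are excluded, so the real work is (i)~to rule out the planar spurious components, i.e.\ to force every element of $R$ to lie in one connected, properly bordered computation table, and (ii)~to attach below the bottom line of the table a gadget that forces $N$, the total size, to equal (a fixed affine function of) $S$ and $T$, while simultaneously reading off the input to $A$ from the bottom line.

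First I would set up the machine model: $\binntime^S(N/\log N)$ is the class of sets $S\subseteq\bbN$ such that some two-stack automaton (equivalently a single-tape TM with insert/delete) accepts the binary representation of $N$ in time $O(N/\log N)$; by the relevant analogue of Proposition~\ref{palpha} we may assume the running time $T$ satisfies $cT\log T \le N$ for a suitable constant $c$ and all large $N$, and $S\le T$. Then I would take the planarized simulation of this automaton with the modification above, so that $|R| = S+4T$. The next step is the clock. The idea is to use the bottom row of the computation table --- which by construction is a simple $\lfa/\rfa$-path of $S + 2\log T$ cells (tape plus $\log T$ padding on each side) --- as the place to hang a counter that, using the $f_{+1}$/$2x$ spiral of Theorem~\ref{makeorder} together with a ``length counter'' gadget in the style of Example~\ref{exmultiply}, measures $T$ in binary along a path of length $O(\log T)$, and asserts that $S + 4T + (\text{size of clock}) = N$. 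Concretely: the clock region should contain $\Theta(\log T)$ new elements carrying the bits of $T$, linked by PIFs enforcing the recurrences that define doubling/addition, together with axioms forcing every element of the whole structure to be either in $R$ (hence in the table, size $S+4T$) or in the clock region (size $\Theta(\log T)$) or in a small fixed border, so that the universe size is a definable affine combination of $S$ and $T$. By absorbing constants and using Proposition~\ref{propclosure} to adjust by finitely many shifts (as in Corollary~\ref{simpcor}, splitting into residue classes if the affine coefficient is not $1$), we obtain exactly $N = $ some fixed expression, hence a model of size $N$ iff $A$ has an accepting run of the right length. The degree stays $\le 3$ because, as in Theorem~\ref{thsimplify}, we can expand each PIF-using construction into a bounded-degree $E$-graph; here the paper states the result in the $\PPifSpec_3$ form, so I would present the clock as using the same PIFs already in play (the three simulation PIFs plus $f_{+1},2x$ and at most one more), keeping the total PIF count and hence the degree within the stated bound, and verifying planarity by checking that the clock gadget can be drawn entirely below the bottom path, in the outer face of the table drawing, without crossings.

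The key steps, in order: (1) fix the two-stack machine model and the time/space normalization $cT\log T\le N$; (2) import the planarized simulation from the discussion preceding Theorem~\ref{clocksim}, modified so $|R|=S+4T$ exactly and so that the bottom row is an exposed $S+2\log T$ path in the outer face; (3) design the clock gadget: a $\Theta(\log T)$-element structure, built from the spiral and multiplication-style axioms, whose axioms (a) force it to encode $T$ in binary, (b) force it to sit below and be glued to the bottom path, and (c) assert that the total universe has size equal to a fixed affine function of $S$ and $T$; (4) add axioms forcing every universe element to lie in $R$, in the clock, or in the fixed border, so nothing extraneous (in particular no spurious toroidal component) survives on a planar model; (5) add axioms, via Example~\ref{exbinary}-style reading of the bottom path, that the initial tape contents of $A$ are the binary representation of $N$; (6) soundness: given an accepting run, build the planar model, count elements, conclude size $N$; completeness: given a planar model of size $N$, extract from $R$ a legal accepting computation of $A$ on $w_N$ of length $T$ with $cT\log T\le N$, i.e.\ $T = O(N/\log N)$, so $N\in S$; (7) cleanup with Proposition~\ref{propclosure} for the finitely many small exceptions and for residue-class splitting, giving membership in $\PPifSpec_3$.

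The main obstacle I expect is step~(3)--(4): ruling out \emph{planar} spurious components while keeping the clock gadget planar and of the right (logarithmic) size. The earlier discussion explains why toroidal tilings are killed by planarity, but one must also ensure there is no planar way to attach extra junk that still satisfies all the ``bulk'' axioms, and the standard trick --- make a single distinguished element (here the corner where the clock meets the end of the bottom path, playing the role of $l$ or $r$) reachable from everything via the PIFs --- interacts delicately with planarity of the combined drawing. Getting the clock to genuinely count up to $T$ (not merely up to $S$) in only $O(\log T)$ elements, using doubling, while the axioms remain first-order and the whole picture planar, is the part I would spend the most care on; everything else (the amortized bookkeeping, the $\bmod$-splitting, the soundness/completeness bookkeeping) is routine given Theorems~\ref{makeorder}, \ref{mainlower} and Proposition~\ref{propclosure}.
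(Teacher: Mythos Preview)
Your proposal has a genuine gap at step~(5), and it is precisely the obstacle the paper's argument is designed to circumvent. You plan to feed the binary representation of $N$ to the machine by the construction of Example~\ref{exbinary}; but that construction adds a PIF $P$ with $P(x)\approx N/2^{x}$, connecting element $x$ to an element exponentially far away, and the paper says explicitly in the paragraph preceding the theorem that ``the models are no longer planar'' once this is used. Your clock idea in step~(3) has the same difficulty: a $\Theta(\log T)$-element gadget that ``counts up to $T$'' by doubling must link positions at distance $\Theta(T)$ in the computation table to positions at distance $\Theta(\log T)$ in the gadget, and there is no evident planar way to do this while keeping the axioms first-order and local. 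You flag this as the hard part, but you do not supply a mechanism.

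The paper's proof avoids this entirely by turning the verification \emph{inside} the machine. Rather than having the structure compute $N$ and hand it to the machine, a modified machine $M'$ receives, as part of its input on the base line, a candidate binary string $b_N$ together with a step-count $b_C$ and a padding word $u$; $M'$ then (i)~reads off $s$ from a word $w$ whose $1$s sit at powers of two, (ii)~computes its own exact running time $T$ as a function of $s$, $C$, $u$, (iii)~checks that $2S+4T+U=N$ with $S=3\cdot 2^s$ and $U$ the number of $1$s in $u$, and (iv)~simulates $M$ on $b_N$ for exactly $C$ steps using $b_C$ as an internal decrementing counter carried along the tape. The only external gadget is the planar spiral of Theorem~\ref{makeorder}/Example~\ref{expowersoftwo} hung below the base line, of size $S$ (not $\log T$), used solely to force $|{\text{base line}}|=3\cdot 2^s$ and to certify the positions of the $1$s in $w$; the fine adjustment to hit $N$ exactly is done by the padding word $u$, each $1$ of which has one extra isolated element attached. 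Thus the size equation, the clock, and the input $b_N$ are all handled by the simulated machine $M'$ itself, and no non-planar ``read the universe size'' gadget is needed.
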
                                 
\begin{proof}

Let $M$ be a machine such that for any $N \in \bbN$, it accepts or
rejects the binary representation of $N$ in time $O(N/\log N)$.

We consider a machine $M'$ that takes as input a word of the form 
$w \# u \# b_C \# b_N$, where $w,u \in \{0,1\}^*$ and
$b_C$ and $b_N$ are
binary strings representing integers $C$ and $N$ respectively.
The machine $M'$ performs the following tasks:

\begin{itemize}
\item Replace $w$ on the tape with $b_s$ , the binary string representing the number $s$ of occurrences of $1$ in the string $w$.  This can be done in time $\Theta(|w| \log s)$.
After this step, 
the tape contains $b_s \# u \# b_C \# b_N $.

\item Based on the values of $s$, $u$, and $C$, 
verify that $N$ contains exactly the size of the model. This point will be
explained later. We also erase everything
from the tape, except $\# b_{C} \# b_N$.
This can be done in time polynomial in $s+|b_C|+|u|$.

\item Verify that the machine $M$ accepts the word $b_N$ in exactly $C$ steps.
This is done by simulating the machine $M$, while keeping the string between the
$\#$ signs as a counter that is decremented at each step and which is shifted left or
right as necessary along with the head movements of $M$. This can be done in
time $\Theta(|b_{C}| C)$.
\end{itemize}

Now, we consider representing the computation of the machine $M'$ as a
structure of size exactly $N$.  The structure consists of the following parts.

\begin{itemize}
\item The \emph{base line} which  is a straight horizontal path of length $S$ representing the initial tape contents.
We let $S = 3 \cdot 2^s$ for some
$s$, and partition the initial tape into three equal length segments,
the first and third of which are blank and the middle one contains
the input $w \# u \# b_C \# b_N$, padded with blanks on the right side.  Here  $w$ is the word of length $2^{s-1}$ which contains $1$s at positions
which are powers of two.

\item The computation, represented as a subgraph of size exactly $4T$ above the base
line, where $T$ is the number of steps taken. We use the construction from
Theorem \ref{mainlower}, improved as outlined above so that the size of the subgraph
is exactly $4T$.

\item To enforce the length requirement, the spiral structure
from Example~\ref{expowersoftwo} is attached below the base line,
but now requiring that the outer
edge of the spiral is $3\cdot 2^s$, instead of checking whether the
whole spiral is a power of $2$. Additionally, the spiral structure
is used to enforce that the occurrences of $1$ in $w$ are  indeed at the powers of two.
There are exactly $S$ elements in the spiral structure.

\item One additional element is attached to each occurrence of $1$ in $u$
in the baseline.  These elements are not attached to anything else so they do not affect the planarity of the construction.
\end{itemize}

Note that all of this can be axiomatized in first-order logic.

Now, we have to explain how $M'$ verifies that the value $N$ given in the input is indeed exactly the size of our structure. By counting all the elements above, we know that
our structure contains $2S + 4T + U$ elements in total, where $S = 3 \cdot 2^s$, 
$U$ is the number of 1s in $u$, and
$T = \Theta \left(|b_C| C + 2^s \log s + (s+|b_C|+|u|)^{O(1)} \right)$; 
the machine $M'$ can be designed so that it can compute its own running time $T$
precisely. 

Hence, if our formula has a model of size $N$, this $N$ must indeed be equal to 
the number encoded as $b_N$ in the initial tape, and this number must be accepted
by the machine $M$.

Now, let $N$ be a number such that the binary representation of $N$ is accepted 
by the machine $M$. If we assume that $|u| = s^2$ and $|b_C| = s$, the value of 
$N$ computed by $M'$ is of form
$\hat{N} = U + f(C, s) + g(s)$,
where $f(C, s) = \Theta(C \cdot s)$ and $g(s) = \Theta(2^s \log s)$.
Take $s$ such that $S = 3 \cdot 2^s = \Theta(N / \log N)$, and such that if we take
the maximum possible values of $C = 2^s-1$ and $U=s^2$, the obtained $\hat{N}$
is greater than $N$. 
By replacing these values by smaller ones ($C$ is obtained by division, and
$U$ is used for exact padding), we get $\hat{N} = N$; since $f(C,s)$ is the dominant
component of $\hat{N}$, our $C$ will still be $\Theta(N / \log(N))$. Since we have
assumed the computation of $M$ can be done in time and space $O(N / \log N)$,
these values of $C, S = \Theta(N / \log N)$ will be sufficient (if the machine $M$
actually needs less time for the given input, we can change it to non-deterministically
wait, so that there is always a run of the exact length we need).
Hence, if the binary representation of $N$ is
accepted  by the machine $M$, there is a model of our formula of size exactly $N$.
\end{proof}

\subsection{Queue machines}

In this section, we show that the class $\PPifSpec_2$ includes the
complexity class $\binnts(N)$, which is a result incomparable with
Theorem~\ref{clocksim}.  The result is established by considering
\emph{queue machines}, which we define next.

A \emph{queue machine} is a non-deterministic Turing machine with two
heads: a read head and a write head.  Both heads can only move to the
right.  The machine starts in an initial state with an input string on
the tape; the read head pointing to the first symbol of the string and
the write head at the first tape cell to the right of the string.  The
transitions are given as $4$-tuples $(q,w_1,w_2,q') \in Q \times
\Sigma^+ \times \Sigma^+ \times Q$, where $Q$ is the set of states of
the machine and $\Sigma$ is the tape alphabet.  A transition
$(q,w_1,w_2,q')$ is enabled if the machine is in state $q$ and the
string starting with the symbol under the read head begins with $w_1$.
If the transition is taken, then the read head moves right to the
symbol just after this occurrence of $w_1$, the string $w_2$ is added
to the write of the current string on the tape, the write head
moves to just past the end of this string, and the machine changes
state to $q'$.  The machine halts when it enters a final state.

It is clear that queue machines are Turing-universal.  However, we are
interested in the following measure of complexity: what is the total
number of symbols (including the initial tape contents) on the tape at
the end of the computation.  Note that no symbol is erased from the
tape, it is only the heads that move to the right.  As an example,
consider a machine with two state $q$ and $q_F$ of which the latter is
final, and the following transitions: $(q,b,a,q)$, $(q,a,ab,q)$,
$(q,A,Ab,q)$, $(q,A,AA,q_F)$.  Suppose this machine is started with
initial tape contents $A$.  One possible run of the machine ends with
the tape containing $AAbAbaAbaabAbaababaAA$.  It can be shown that in
any computation of this machine the number of symbols at the end of
the computation is a Fibonacci number, and conversely, there is such a
computation for each Fibonacci number.  To be precise, successive $A$s
(apart from the last one) appear exactly at positions $F-1$ where $F$
is a Fibonacci number.

For a queue machine $M$, let $S_M\subseteq \bbN$ denote the set of
numbers $n$ such that some computation of $M$ ends with exactly $n$
symbols on the tape.
Let $\QTL$ denote the collection of sets which are $S_M$ for some
queue machine $M$.  We can then establish the following inclusions.

\begin{thm}\label{queuemach}
$\binntime(\sqrt N) \subseteq \binnts(N) \subseteq \QTL \subseteq \PPifSpec_2$
\end{thm}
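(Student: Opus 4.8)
\textbf{Proof plan for Theorem~\ref{queuemach}.}

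The theorem asserts a chain of three inclusions, and I would treat each separately, as the methods are quite different. For the first inclusion $\binntime(\sqrt N)\subseteq\binnts(N)$, the plan is purely complexity-theoretic: a nondeterministic machine running in time $O(\sqrt N)$ uses at most $O(\sqrt N)$ space, so it lies in $\binntisp(\sqrt N,\sqrt N)$, and $\sqrt N\cdot\sqrt N=O(N)$, putting it in $\binnts(N)$ by definition. This step is essentially a one-line observation about the definitions and should present no difficulty. For the second inclusion $\binnts(N)\subseteq\QTL$, I would start from a set $S\in\binntisp(T(N),S(N))$ with $T(N)\cdot S(N)=O(N)$, witnessed by a nondeterministic single-tape Turing machine $M_0$. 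The idea is to simulate $M_0$ by a queue machine $M$ so that the final tape length of $M$ equals (up to a controlled factor) $T(N)\cdot S(N)$: a queue machine naturally records its entire history, so one encodes each successive configuration of $M_0$ (of width $O(S(N))$) as a block written to the queue, recovering the previous configuration by reading it back off the front of the queue. Running $M_0$ for $T(N)$ steps thus deposits $\Theta(T(N)\cdot S(N))$ symbols, and $M$ enters a final state exactly when $M_0$ accepts. Some care is needed so that the final tape length depends only on $N$ (not on the nondeterministic choices) — one can append a deterministic padding phase that pads the tape out to a canonical length $c\cdot T(N)\cdot S(N)+l$, and then compose with Proposition~\ref{propclosure} (closure under affine shifts $x\mapsto ax+l$, applied coordinatewise via the four basic operations) and Theorem~\ref{makeorder}-style bookkeeping to recover $S$ itself up to the affine correction, much as in Corollary~\ref{simpcor}.

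The third and main inclusion $\QTL\subseteq\PPifSpec_2$ is where the real work lies, and this is the step I expect to be the principal obstacle. Given a queue machine $M$, I want a first-order sentence $\phi_M$ over two PIF symbols whose \emph{planar} models of size $n$ exist exactly when $n\in S_M$. The natural picture: lay the tape contents out as a horizontal path, and let the computation unfold as a planar structure attached above (or around) it, exactly in the spirit of the machine-simulation construction in the proof of Theorem~\ref{mainlower} and its planar refinement described in Section~\ref{secplanar}. A queue machine is in fact better suited to a planar encoding than a two-stack automaton, because the read and write heads both move monotonically to the right, so the "data flow" from the read position to the write position is acyclic and can be drawn without crossings: each symbol written is consumed (read) exactly once, later, and one threads a PIF pointer from the cell where a symbol is written to the cell where it is subsequently read. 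The two PIF symbols would be: one for the "successor along the tape" relation ($f_{+1}$), and one for this "read-from / write-to" matching pointer. The axioms must enforce (i) that the tape is a single finite path, (ii) that the matching pointer is a well-formed FIFO pairing consistent with the recorded transitions, (iii) that the string under the read head at each step begins with the appropriate $w_1$ and that the corresponding $w_2$ was appended, (iv) that the run starts in the initial state on a legal input and ends in a final state, and crucially (v) that \emph{every} element of the universe is part of this coding, so that the model size is forced to equal the final tape length. Point (v) is exactly the subtlety flagged in Section~\ref{secplanar} (the tiling/torus obstruction), and the argument that planarity rules out the spurious periodic models — since any such padding gadget, when iterated, would have to embed a tiling of a surface of positive genus — would be reused here, adapting the discussion around Figure~\ref{tmsim-perio}.

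The hardest part, concretely, will be verifying that the entire construction is genuinely first-order expressible with only \emph{two} PIFs while remaining planar, and simultaneously pinning the size: one must reconcile the need for auxiliary pointers (order, the read/write matching, state and letter bookkeeping) against the tight budget of two partial injective functions, presumably by reusing the $f_{+1}$ structure of Theorem~\ref{makeorder} to linearly order everything and packing all other information into unary predicates plus the single matching PIF. I would also need to double-check that the final tape length of the queue machine, as coded, is exactly the number of vertices (no hidden constant-factor blow-up), since $\QTL$ is defined by the exact symbol count; any affine discrepancy is absorbed via Proposition~\ref{propclosure} as in the earlier proofs. The planarity verification — exhibiting an explicit plane drawing of the "staircase" of queue configurations, analogous to Figure~\ref{tmsim} — is routine in principle but is where an inattentive construction could secretly introduce a crossing, so that is the step I would be most careful to get right.
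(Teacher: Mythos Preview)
Your overall decomposition matches the paper's: the first inclusion is the definitional one-liner you give; the second simulates a time--space bounded machine by a queue machine that records successive configurations; the third axiomatises a queue-machine run with the tape successor plus one further PIF, relying on planarity to exclude spurious components. Two points need correction.

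First, your description of the second PIF is wrong as stated. You say it ``threads a pointer from the cell where a symbol is written to the cell where it is subsequently read''---but in a queue machine the tape is never overwritten, so a symbol is written and later read \emph{at the same position}, making that pointer the identity. The paper's PIF $f_{rw}$ instead maps the \emph{read-head position} at time $t$ to the \emph{write-head position} at the same time $t$. Because both heads move monotonically rightward, successive $f_{rw}$-edges have both endpoints increasing, and routing each edge from its source above the line, around the left end, and to its target below the line yields concentric non-crossing arcs. Your intuition that monotone head motion forces planarity is right, but the pointer you wrote down does not realise it.

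Second, you overcomplicate the bookkeeping. The paper uses no spiral from Theorem~\ref{makeorder} and no affine correction via Proposition~\ref{propclosure}: the universe simply \emph{is} the final tape, so model size equals symbol count exactly, and states and letters are handled by unary relations. For the second inclusion the paper likewise avoids affine slack by the self-clocking device of Theorem~\ref{clocksim}: the intermediate machine $M'$ halts precisely when the cumulative tape usage (sum of configuration lengths over all steps) equals the input value $N$, so the standard queue simulation of $M'$ finishes with exactly $N$ symbols. Your point~(v), that planarity is what kills torus-embeddable extra loops, is correct and is exactly the observation the paper records immediately after the proof.
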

\begin{proof}
The first inclusion $\binntime(\sqrt N) \subseteq \binnts(N)$ follows
from definitions, since a Turing machine running in time $\sqrt N$
cannot take space more than $\sqrt N$ and thus the product of time and
space required is at most $N$.

For the second inclusion, we can use a simulation similar to that in
the proof of Theorem~\ref{clocksim}.  That is, from a machine $M$
accepting a set in $\binnts(N)$, we can first construct a machine $M'$
that clocks the number of steps and also clocks the space required,
while simulating $M$.  The machine $M'$ halts when the total space
(i.e.\ the length of the string on the tape, summed over all
configurations that have occurred in the computation) is exactly $N$,
the number that was represented in binary in the initial
configuration.   Now, it is easily checked that a queue machine
simulating $M'$ (in the standard way) will halt with exactly $N$
symbols on the tape if the binary representation of $N$ is accepted by
$M$. 

For the third inclusion, we show that, given a queue machine $M$, we
can axiomatise in first-order logic a structure $W$ with two PIFs that
represents the string that is on the tape at the end of a successful
computation of $M$.  One PIF is the successor function, connecting
position $x$ on the tape to $x+1$.  The other is a function $f_{rw}$
which maps a position which is the position of the read head at some
time $t$ to the position of the write head at the same time.  In
addition, we have a number of unary relations to code the content of
the tape and the state of the machine.  To see that the structure described is planar, consider the
embedding that places all points in a line, with the edges
corresponding to the successor function following the line.  The edges
corresponding to $f_{rw}$ connect to the source above the line and
the target below the line, looping around the left end of the line.
Since, moving from time $t$ to time $t+1$, both the read and the write
heads move to the right, it can be seen that such edges do not cross.
Axiomatising such a structure in first-order logic is now a
straightforward matter: we need to note that if a position $x$ is the
source of an $f_{rw}$ edge and $x$ encodes the fact that at this point
the machine is in state $q$, then for some transition $(q,w_1,w_2,q')$,
the string $w_1$ appears in the cells starting at $x$, the string
$w_2$ appears in the tape to the right of $f_{rw}(x)$, that $f_{rw}$
is defined at $x + |w_1|$ and nowhere else in between $x$ and  $x +
|w_1|$, and $f_{rw}(x+|w_1|) = f_{rw}(x)+|w_2|$.
\end{proof}

Note that the assumption of planarity is essential to the proof
above.  The first-order axioms we construct from a machine $M$ will
admit models that, in addition to $N$ elements representing a valid
computation of $M$ contain a loop containing a sequence of symbols
$wx$ such that $M$ when reading $wx$ outputs $xw$.  This loop will be necessarily
non-planar, though embeddable on a torus.  Indeed a model of the
axioms could admit any number of such additional loops.  On the other
hand, if $M$ is a machine which (by some measure) always increases the
length of the string at each step, then such a loop becomes
impossible.  The example given above generating sequences of Fibonacci
number length is one such machine.  Thus, one obtains this class as a
planar spectrum without having to assume planarity.  Indeed, 
Example~\ref{exfib} (in its planar form) is essentially encoding this
idea. 

\subsection{Upper bounds}

We now establish an upper bound that shows, in particular, that not every
spectrum is a planar spectrum.  Let $\PSpec$ denote the class of
\emph{planar spectra}, i.e.\ those sets $S \subseteq \mathbb{N}$ for
which there is a first-order sentence $\phi$ such that $N \in S$ if,
and only if, there is a planar model of $\phi$ with $N$ elements.  In
particular, $\PPifSpec_d \subseteq \PSpec$ and $\PBDSpec_d \subseteq
\PSpec$ for any $d$.

\begin{thm}\label{planarupper}
 $\PSpec \subseteq \binntime(N \log^2 N)$.
 \end{thm}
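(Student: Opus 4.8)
The plan is to mimic the structure of the proof of Theorem~\ref{mainupper}: given a first-order sentence $\phi$, build a nondeterministic multi-tape machine $A_\phi$ which, on input $N$, guesses a candidate planar model $G$ on $N$ vertices and verifies that $G \models \phi$, all within time $O(N\log^2 N)$. The new ingredient, absent in the bounded-degree case, is that we can no longer assume $G$ has a short description, since a planar graph on $N$ vertices may have vertices of unbounded degree. However, planar graphs are sparse — a simple planar graph on $N$ vertices has at most $3N-6$ edges — so $G$ still has a description of length $O(N\log N)$: we write, as before, $N$ vertex descriptions, each listing the unary predicates satisfied and the list of neighbour identifiers, and the total number of neighbour-identifiers written across all descriptions is $O(N)$, each taking $O(\log N)$ bits. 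So the guessed description fits in $O(N\log N)$ bits. The remaining difficulty is that Hanf locality (Theorem~\ref{hanf}) does not apply to graphs of unbounded degree, so we need a different tool for the model-checking step.

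The resolution is to invoke the result of Frick and Grohe~\cite{FG01} (cited already in the introduction as the source of the planar upper bound): first-order model checking on planar graphs is fixed-parameter linear, i.e.\ for each fixed sentence $\phi$ there is a deterministic algorithm deciding $G\models\phi$ in time $O(|G|)$ for planar $G$, where $|G|$ denotes the size of the (sparse) adjacency-list representation. Running this as a subroutine on a RAM gives a nondeterministic RAM that decides in time $O(N)$ whether $\phi$ has a planar model of size $N$: guess the $O(N\log N)$-bit description, check planarity in linear time (Hopcroft--Tarjan), then run the Frick--Grohe checker. To transfer this to the Turing-machine model and obtain the stated $\binntime(N\log^2 N)$ bound, I would appeal to Monien's theorem~\cite{Monien77}, exactly as the authors do in the remark following Theorem~\ref{mainupper}: any set decidable in linear time on a nondeterministic RAM lies in $\binntime(N\log^2 N)$. (One should be mildly careful that "linear time on a RAM" is measured against the length $O(N\log N)$ of the description rather than $N$ itself; but since we are allowed running time greater than the input value and $N\log N\cdot\mathrm{polylog} = O(N\log^2 N)$, the bookkeeping goes through. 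Alternatively, one reconstructs the Frick--Grohe computation directly on a multi-tape Turing machine with the same radix-sort-style indexing trick used in Theorem~\ref{mainupper}, organising the locality-type data via the "guess a permutation in $\log N$ split-merge passes" technique, each pass costing $O(N\log^2 N)$.)

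The main obstacle I anticipate is precisely the gap between the RAM model in which Frick--Grohe is naturally stated and the multi-tape Turing model in which $\binntime$ is defined. Frick and Grohe's algorithm relies on random-access data structures (it computes, via the locality/Gaifman-normal-form machinery, counts of local neighbourhood types and then does constant-time table lookups into a $\phi$-dependent table), and simulating random access on a Turing tape naively costs a $\Theta(N)$ factor, which would blow the bound up to $\Theta(N^2)$. This is exactly the situation the authors handled for Theorem~\ref{mainupper} by replacing random access with a sequence of $O(\log N)$ sorting passes; the work here is to check that the data accessed by the Frick--Grohe procedure can likewise be laid out so that all required lookups are resolved by a bounded number of synchronised linear scans after $O(\log N)$ sorting passes, giving total time $O(N\log^2 N)$. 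Invoking Monien's theorem sidesteps this entirely and is the cleanest route, so I would present that as the main argument and relegate the direct Turing-machine simulation to a remark.
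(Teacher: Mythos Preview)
Your proposal is correct and takes essentially the same route as the paper: guess a sparse (at most $3N$-edge) planar graph on a nondeterministic RAM, check planarity in linear time, run the Frick--Grohe linear-time planar model checker, and then invoke Monien's transfer from nondeterministic linear-time RAM to $\binntime(N\log^2 N)$. The only cosmetic differences are that the paper cites the planarity test of~\cite{FOP06} rather than Hopcroft--Tarjan, and that the paper does not entertain your alternative of reconstructing Frick--Grohe directly on a multi-tape machine (it simply invokes Monien, which you correctly identify as the cleanest route).
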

\begin{proof}
  It is known that, for any first-order sentence $\phi$ there is a
  linear-time algorithm that decides, for a planar graph $G$ whether
  or not $G \models\phi$~\cite{FG01}, for instance, on a random-access
  machine (RAM).  Since the average degree of a planar graph is at
  most $6$, a planar graph on $N$ vertices has at most $3N$ edges.
  Thus, a non-deterministic RAM can, on input $N$ guess a graph $G$
  with at most $3N$ edges, check that the graph is planar and then
  check determine whether $G \models \phi$, all in time $O(N)$.  For a
  linear time algorithm for checking that a graph is planar, we can
  use the algorithm from~\cite{FOP06}.  Since any set decidable in
  linear time on a non-deterministic RAM is in $\binntime(N \log^2
  N)$~\cite{Monien77}, the result follows.
\end{proof}

From the time hierarchy theorem~\cite{SFM78}, it follows that
$\binntime(N \log^2 N)$ is a proper subset of $\NE$ and since the
latter is exactly the class of all spectra, there are spectra which
are not planar spectra.  This answers negatively Open Question 6 from~\cite{fiftyyears}.

With respect to bounded-degree planar spectra, we note that the set of
such spectra with degree bounded by $3$ is already quite rich.

\begin{thm}\label{planardegree}
For any $d$, if $S \in \PPifSpec_d$ then $S' = \{2dn \mid n \in S\}
\in \PBDSpec_3$.
\end{thm}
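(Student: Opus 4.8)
The plan is to adapt the gadget construction from Theorem~\ref{thsimplify}, but carefully, so that planarity is preserved. Recall that in that proof each element $a$ of the PIF-structure $\stra$ is replaced by a path $p_1(a),\ldots,p_d(a),q_1(a),\ldots,q_d(a)$ carrying the unary colours $P_1,\ldots,P_d$, and an edge is placed between $p_i(a)$ and $q_i(a')$ whenever $f_i(a)=a'$. For a general bounded-degree structure this replacement respects the degree bound~$3$; the new content here is that if $\stra$ is planar then the resulting graph $G$ is planar too, so that the same construction yields a sentence in $\PBDSpec_3$ rather than just $\BDSpec_3$.

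First I would fix a planar embedding of $\stra$ (more precisely of its Gaifman graph, where each $f_i$-edge is drawn as a curve), noting that the $2d$ edges incident to a vertex $a$ occur in some cyclic order around $a$ in the embedding. I would then replace $a$ by the length-$2d$ path gadget drawn in a small disk around the former location of $a$, routing the path so that the endpoints destined to carry the $f_i$-edges leave the disk in exactly the cyclic order dictated by the embedding of $\stra$. Since the gadget is itself a path (hence planar) and can be laid out inside the disk with its $2d$ ``ports'' appearing on the boundary circle in any prescribed cyclic order, and since the $f_i$-edges of $G$ follow the same curves as the corresponding edges of $\stra$ outside these disks, the whole graph $G$ embeds in the plane without crossings. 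Thus $G$ is planar and of degree at most~$3$.

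Next I would observe, exactly as in Theorem~\ref{thsimplify}, that the class of coloured graphs arising this way is first-order axiomatisable by a sentence $\phi_g$, and that $\stra$ is first-order interpretable in $G$ (reading off each $f_i(a)=a'$ from the presence of the $p_i(a)$--$q_i(a')$ edge); composing the interpretation with the original sentence $\phi$ for $S$ gives a sentence $\psi$. Every model of $\psi$ is one of these gadget graphs and hence planar of degree~$3$, and its size is $2dn$ where $n=|\stra|$ ranges over $\spec(\phi)=S$. Note there is no need for the $l$ isolated padding vertices of Theorem~\ref{thsimplify} here, which is why the statement has $\{2dn\}$ rather than $\{2dn+l\}$. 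Conversely, from a planar model $\stra$ of $\phi$ of size $n\in S$ the construction above produces a planar degree-$3$ model of $\psi$ of size $2dn$. Hence $S'=\{2dn : n\in S\}\in\PBDSpec_3$.

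The main obstacle is precisely the planarity bookkeeping in the second paragraph: one must be sure that the port vertices of the path gadget can be made to appear on the boundary of the disk in the cyclic order demanded by the ambient embedding of $\stra$. This is where a picture (analogous to Figures~\ref{fig:spiral} and~\ref{tmsim}) makes the argument transparent --- the path can be snaked back and forth inside the disk so that any desired interleaving of the $p_i$ and $q_i$ endpoints along the boundary is realised --- but stating it cleanly in prose requires a little care. Everything else (the degree bound, the first-order axiomatisation, the interpretation, and the size count) is routine and mirrors Theorem~\ref{thsimplify} verbatim.
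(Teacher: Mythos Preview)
Your overall strategy is exactly the paper's: replace each element $a$ by a $2d$-vertex path gadget and use the cyclic order of edges around $a$ in a fixed planar embedding of $\stra$ to control the layout. The gap is in your claim that the path $p_1,\ldots,p_d,q_1,\ldots,q_d$, kept in that fixed order, can be drawn inside the disk with its ports meeting the boundary in \emph{any} prescribed cyclic order. This is false. Adding a pendant stub from each path vertex $v_j$ to its boundary port $b_j$ yields a caterpillar; in any planar drawing of a tree in a disk with its leaves on the boundary, deleting an internal edge separates the leaves into two sets that must occupy complementary arcs of the circle. Deleting the path edge between $v_j$ and $v_{j+1}$ therefore forces $\{b_1,\ldots,b_j\}$ and $\{b_{j+1},\ldots,b_{2d}\}$ each to be contiguous, so only the unimodal cyclic orders (an increasing run of indices followed by a decreasing run) are realisable. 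Already with four ports the cyclic order $b_1,b_3,b_2,b_4$ is impossible, so no amount of snaking helps.

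The paper avoids this with a simpler device: rather than snaking a fixed path, it \emph{permutes the order of the $2d$ vertices along the gadget path}, separately for each $a$, so that the path order itself coincides with the cyclic order of the corresponding edges around $a$ in the planar embedding of $\stra$. The gadget can then be drawn as an arc just inside the boundary circle with each port leaving radially, and planarity is immediate. Because the unary colours $P_i$ still identify which gadget vertex plays which role, the varying path order does not obstruct the first-order axiomatisation $\phi_g$ or the interpretation of $\stra$ in $G$; one merely relaxes $\phi_g$ to allow an arbitrary ordering of the coloured vertices along each gadget path.
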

 \begin{proof}
 This can be established by an argument similar to that for Corollary~\ref{simpcor}.  We cannot literally use the construction from the proof of Theorem~\ref{thsimplify}, as there is no guarantee that the structure $G$ obtained in the proof of that theorem is planar, even if $\stra$ is.  In that construction we build a gadget corresponding to each element $a$ of $\stra$ consisting of points $p_1,\ldots,p_d,q_1,\ldots,q_d$ connected in an $E$-path \emph{in that order}.  It is easily seen that we can, instead, choose a different order for each $a$ which will guarantee that $G$ is planar if $\stra$.  Indeed such an order is given by the order of incidence of edges on $a$ in a planar embedding of $\stra$.  Moreover, since we have unary relations $P_i$ identifying the vertices $p_i$ and $q_i$, the order is not relevant and we are still able to interpret $G$ in $\stra$ and vice versa.
 \end{proof}

\section{Forcing planarity}

In this section, we turn to considering the classes of spectra of the form $\FPPifSpec_d$ for some $d$.   We have previously seen examples that demonstrate that some sparse sets of numbers (such as the powers of 2 or the Fibonacci numbers) are realized as spectra of sentences $\phi$ such that all models of $\phi$ are planar.  However, in order to prove the general lower bounds established in Theorems~\ref{clocksim} and~\ref{queuemach}, we had to explicitly restrict attention to planar models as this is not something that was enforced by the axioms.  Indeed, in both cases, the axioms admit non-planar models where there are components disjoint from the machine simulation that can be embedded on a torus.   We now show that we are able to construct a machine simulation which can be axiomatised by a formula that enforces planarity, provided that we severely restrict the memory size: it needs to be linear in the size of the input (i.e.\ the binary representation of the number $N$).  We also need to restrict the time that the machine can run more stringently than in our previous results.

\begin{thm} \label{thm:forcing}
$\binntisp(N^{1-\epsilon}, \log N) \subseteq \FPPifSpec_2$ for all $\epsilon$ with $0< \epsilon < 1$.
\end{thm}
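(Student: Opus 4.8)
The plan is to revisit the machine–simulation construction of Theorem~\ref{mainlower}, together with the ``exactly $4T$ elements per layer'' improvement described at the start of Section~\ref{secplanar}, and to redesign the baseline so that the whole configuration is rigidly pinned down by the axioms --- leaving no room for the toroidal tiling components that plagued Theorems~\ref{clocksim} and~\ref{queuemach}. The key observation is that here the tape is of length only $S = \Theta(\log N)$: it suffices to store the binary representation $w_N$ of $N$ itself, plus a constant amount of padding. So the baseline has logarithmic length and can be anchored at both ends by finite gadgets. Concretely, I would take as the intended model of size $N$ a baseline of $\Theta(\log N)$ cells holding $w_N$, a computation region of exactly $4T$ elements above it (where $T = O(N^{1-\epsilon})$ is the running time), and then, attached below the baseline, a copy of the basic spiral structure $\stra_M$ from Theorem~\ref{makeorder} whose size is chosen to be $N - (S + 4T)$, so that the total is exactly $N$. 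Using Example~\ref{exbinary} on the spiral, we can force the binary representation encoded along the spiral's distinguished sequence to agree with the tape contents $w_N$, so that the machine really is run on $N$. The complexity budget works: $S + 4T = \Theta(\log N) + O(N^{1-\epsilon}) = o(N)$, so the spiral has $\Theta(N)$ elements and absorbs the slack, and since the spiral can be made to exist in every cardinality, we get models of every size in $S$ while admitting no models of sizes outside $S$.

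The heart of the argument --- and the only genuinely new ingredient over Theorem~\ref{mainlower} --- is the \emph{forcing} of planarity, i.e.\ ruling out stray non-planar components. The earlier constructions failed because a long enough tape admits a segment that, read by the head moving right and then left, returns to a translate of itself, so a torus-embeddable loop satisfies all the ``bulk'' axioms. The remedy is that with a logarithmic-length tape the computation region is ``short and fat'' in only one direction: I would choose the improved $4T$-per-step construction so that each layer is forced to strictly contain the previous layer's head positions and two new head positions, and crucially I would add axioms asserting that every element of $R$ is reachable, via the functions $\ufa, \uan, \Lfa, \Rfa$ (and the successor function along the baseline and along the spiral), from the unique ``time-zero'' / ``left end'' / ``spiral origin'' elements singled out in Theorem~\ref{mainlower} and Theorem~\ref{makeorder}. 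Since the baseline has bounded (logarithmic) width and the spiral is acyclic by the cycle-elimination argument in the proof of Theorem~\ref{makeorder}, a finite back-and-forth / induction on layers shows that any model of these axioms is, on $R$, exactly the intended finite picture: there is no room to insert a tiling loop, because such a loop would have to be wider than $S$ in the spatial direction, contradicting that all of $R$ sits within the $S$-wide computation strip pinned to the finite baseline. This forces \emph{all} models (restricted to $R$) to be planar, and the elements outside $R$ form the spiral, which is planar by Figure~\ref{fig:spiral}; a disjoint union of two planar pieces glued along the baseline remains planar, so every model of $\phi$ is planar, witnessing membership in $\FPPifSpec_2$. (The PIF count: the successor $f_{+1}$ and the ``binary-predecessor'' PIF $P$ of Example~\ref{exbinary} reused as the $\ufa$ of the computation, together with the spiral's $f_{2x}$, can be arranged to total $2$; more carefully, one may need the construction of Theorem~\ref{makeorder} to supply order, but as in Example~\ref{exfib} the auxiliary doubling function can be eliminated, keeping us at $2$ PIFs.)

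The main obstacle I anticipate is precisely this last rigidity claim: proving that the conjunction of the ``reachability from anchors'' axioms with the local correctness axioms has \emph{no} model other than the intended one, in particular no model with an extra toroidal component. Getting this right requires the computation region to be genuinely one-dimensional in its unbounded direction --- which is why we need $S = \Theta(\log N)$, i.e.\ logarithmic space, and a strict-growth discipline ($4$ fresh elements each step, potential argument as in Theorem~\ref{mainlower}) so that the $\ufa$-lists have no cycles and terminate. The second, more bookkeeping-flavoured obstacle is the exact size accounting: one must verify that for \emph{every} $N\in S$ there is a choice of spiral length making the total come to exactly $N$ --- this is where $T = O(N^{1-\epsilon})$, rather than merely $T=o(N)$, matters, to leave enough headroom for the spiral (whose length is not completely free, since Example~\ref{exbinary}'s encoding constrains it), analogous to the $\hat N = N$ adjustment in the proof of Theorem~\ref{clocksim}. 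Finitely many small $N$ can be patched using Proposition~\ref{propclosure}.
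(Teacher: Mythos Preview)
Your proposal has a genuine gap in the rigidity step. The ``reachability from anchors'' axioms you rely on are not expressible in first-order logic: the computation region has diameter $\Theta(T)=\Theta(N^{1-\epsilon})$, so no FO formula can assert that every element of $R$ lies on a path of bounded length from a distinguished anchor. And the informal claim that a spurious toroidal component ``would have to be wider than $S$ in the spatial direction'' is unfounded: such a component is \emph{disjoint} from the intended picture, so its dimensions are constrained only by the local transition axioms, not by the width $S$ of the real tape. The torus tiles discussed before Theorem~\ref{clocksim} arise from any behaviour of the head that returns to a translate of an earlier local configuration; nothing about a short real tape prevents that. Separately, the PIF accounting does not work out: the simulation of Theorem~\ref{mainlower} already uses three PIFs ($\lfa,\rfa,\ufa$), the spiral needs $f_{+1},f_{2x}$, and Example~\ref{exbinary} adds one more; the assertion that these ``can be arranged to total 2'' is not substantiated.

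The paper's argument is of a completely different shape and avoids both issues. Rather than attaching a computation graph to a spiral, it makes the \emph{entire} model be the spiral $\stra_N$ of Theorem~\ref{makeorder}, using only the two PIFs $f_{+1}$ and $f_{2x}$; everything else --- a base-$d$ representation of $N$ computed layer by layer, a definable $\Theta(N^{1-\epsilon})\times\lceil\log_d N\rceil$ grid $\Gamma$ sitting on certain spiral positions, and the machine run on that grid --- is encoded purely by unary relations. Since Theorem~\ref{makeorder} already shows that every model of $\phi_M$ is, up to isomorphism, the planar spiral of Figure~\ref{fig:spiral}, and adding unary relations cannot affect the Gaifman graph, planarity of all models is automatic: there is no separate component to rule out. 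The grid's ``horizontal'' step $g$ is not introduced as a PIF; instead, values of the unary relations are propagated along the successor $f_{+1}$ between consecutive grid points, which is what keeps the PIF count at two.
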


\begin{proof}
  Suppose we have a nondeterministic Turing machine $M$ which accepts the binary representation of $N$ in time $N^{1-\epsilon}$ and space $\log N$.  Let $d > 10$ be an integer such that $d > 2^{\frac{1}{\epsilon}}$.  We can assume, by changing alphabet as necessary, that machine $M$ uses space $\log_d N$.  Given a structure in a vocabulary with two PIFs that defines a $(N^{1-\epsilon}) \times \log_d N$ grid, we can, in a standard way, give first-order axioms saying that certain unary relations on the grid elements code an accepting computation of $M$.  Thus, to establish the result, it suffices to show that we can axiomatise, in first-order logic, a planar structure of size $N$ containing a definable grid of dimension $(N^{1-\epsilon}) \times \log_d N$.  This is what we do next.

We build on the basic spiral structure constructed in the proof of Theorem~\ref{makeorder}.  Indeed, the two PIFs are exactly those defined there, we will only add unary relations to the structure.  We assume that the elements of the structure are $A = \{1,\ldots,N\}$.  Let $n$ be the largest integer such that $2^n < N$ and we divide the set  $A$ into layers $L_1,\ldots,L_n,L_{n+1}$ where, for $k\leq n$, $L_k$ contains the elements in the range $[2^{k-1},2^k-1]$ and $L_{n+1}$ contains the elements in the range $[2^n,N]$.  We now inductively define two functions $v: A \rightarrow \{0,\ldots,4\}$ and $f: A \rightarrow \{0,\ldots,d-1\}$.

For $x \in L_{n+1}$, we let $v(x)=1$ and $f(x)=0$.

For $x\in L_n$, we let $v(x) = 2 + v(2x) + v(2x+1)$, where we assume $v(y)$ denotes $0$ if $y > N$.  Again, we let $f(x) = 0$.  Note that $\sum_{x\in L_n} v(x) = N+1$.

For $m < n$ and $x\in L_m$, we define $f(x) = \sum_{\{y \in L_{m+1} \mid y \geq 2x\}}v(y) \pmod d$.
Note that $f(x) = (v(2x) + v(2x+1) + f(x+1)) \mod d$. We then define $v(x)$ by
$v(x) = (v(2x) + v(2x+1) + f(x+1) - f(x)) / d \in \{0, 1\}$. In other words, we compute the sum of
numbers $v(y)$ for all $y$ in the layer $L_{m+1}$, from right to left, and store the suffix
sums as $f(x)$ in the layer $L_m$. Whenever the calculated suffix sum exceeds $d$, we
subtract $d$, and set $v(x)$ to 1. Hence, the sum of $v(x)$ for $x \in L_m$ will equal
the sum of $v(y)$ for $y \in L_{m+1}$ divided by $d$, and $f(2^{m-1})$ will be the
remainder.


Note that $f(1), f(2), \ldots, f(2^{n-1}), f(2^n)$ then contains the
representation of $N$ in base $d$.  Since $d>2$, there is $m_0$ such
that $f(2^k) = 0$ for all $k<m_0$.  Let $\Gamma = \{2^i x: x \in L_{m_0}, i \geq 0\}$.
For $x \in \Gamma$, let $g(x)$ be the smallest $y \in \Gamma$ such that $y>x$, but 
is still in the same layer -- if there is none, $g(x)$ is undefined.

We add $d$ unary relation symbols to our vocabulary interpreted by the $d$ sets $f(x)=0, f(x)=1, \ldots, f(x)=d-1$; five unary relations to represent 
$v(x)=0, \ldots, v(x)=4$, and one interpreted by the set $\Gamma$.  It can be checked that there are first-order axioms that ensure that the interpretation of these relation symbols is exactly that.

Now, the set $\Gamma$, together with the two PIFs $g$ and $f_{2x}$ is a grid of dimension $2^{m_0} \times (n-m_0)$. 
Moreover, $n-m_0$ is just the number of digits in the base $d$ representation of $N$, i.e.\ $\lceil \log_d N \rceil$.  Thus we have
$$2^{m_0} \sim 2^{log_2 N - log_d N} = 2^{log_2 N (1 - log_d 2)} = N ^ {1 - log_d 2} > N^{1-\epsilon},$$
where the last inequality follows from the fact that $d > 2^{\frac{1}{\epsilon}}$.  Thus, this gives us the grid we require.  However, $g$ is not definable in FO from $f_{+1}$ and $f_{2x}$.   Nonetheless, we can axiomatise the fact that a certain set $\calR$ of unary relations codes the computation of the machine $M$ on the grid defined by $g$ and $f_{2x}$ on $\Gamma$.   We can say that if the interpretation of $\calR$ at a position $x = 2^i t$ encodes the contents of tape cell $i$ at time $t$, then position $2^i t+1$ encodes it at the next time step.  This can then be propagated to the position $2^i (t+1)$  (i.e.\ $g(x)$) by means of the axiom
$\forall_{x \notin \Gamma} R(x+1) \iff R(x)$ for each $R \in \calR$.
\end{proof}

As an example, it follows that the set of prime numbers is the spectrum of a sentence using only two PIFs and all of whose models are planar graphs (of degree at most 4).  This is the case because the straightforward primality test of checking all potential divisors from $2$ to $\sqrt N$ shows that the set of primes is in $\binntisp(N^{\frac{1}{2}}, \log N)$.
\section{Conclusions and future work}
We have shown that, in considering the spectra of first-order
sentences, restricting the models to have bounded degree or to be
planar (or, indeed, both in combination) is a real restriction.  We
can put a complexity upper bound on these spectra that is strictly
lower than the $\NE$ characterization we know for the class of all
spectra.  At the same time, we have demonstrated that these classes of
spectra are still very rich by showing that we can encode certain
classes of machine simulations in them.  There are, however, gaps
between the lower bounds and the upper bounds we have established and
an obvious question for future work is whether either set of bounds
can be tightened.  It would be interesting to obtain an exact
characterisation in terms of complexity for the various classes of
spectra that we  consider.  Such tight characterisations seem unlikely
with Turing machines, but it might be possible to obtain them with
machine models with a more robust notion of linear time such as, for
example, Kolmogorov-Uspenski machines
(see~\cite{Gurevich-Kolmogorov}). 

If we consider spectra of sentences which only have planar models, we can still establish a lower bound, as in Theorem~\ref{thm:forcing}.  This is based on a simulation of machines with a severe space restriction, and it is quite possible that this bound is not tight.  Thus, a natural question is whether we can construct simulations of machines with higher memory requirements in such cases.

\section*{Acknowledgment}
The research reported here was carried out during a visit by the
second author to the University of Cambridge in 2011, supported by the ESF Resaearch Networking Programme GAMES. Eryk Kopczy{\'n}ski was also partially supported by the Polish National Science research grant DEC-2012/07/D/ST6/02435.

\bibliographystyle{alpha}
\bibliography{spectra}{}

\end{document}